\documentclass[runningheads]{llncs}
\usepackage[utf8]{inputenc}
\usepackage{amsfonts}
\usepackage{amsmath}

% tricks to load amsthm

\usepackage{amsthm}
\usepackage{graphicx}
\usepackage{wrapfig}
\usepackage{subfig}
\usepackage{array}
\usepackage{tabularx}

%\graphicspath{{./img/}}

% abstract
%\usepackage{abstract}
%\renewcommand{\abstractnamefont}{\normalfont\bfseries}
%\renewcommand{\abstracttextfont}{\normalfont\small\itshape}

\newcommand{\tuple}[1]{\langle #1 \rangle}
\newcommand{\nat}{\mathbb{N}}

\newtheorem{observation}{Observation}
\newtheorem{net}{Network}

\title{Smaller Selection Networks for Cardinality Constraints Encoding}

\titlerunning{Smaller Selection Networks for Cardinality Constraints Encoding}

\author{
Micha\l~Karpi\'nski, Marek Piotr\'ow}

\authorrunning{Micha\l~Karpi\'nski \and  Marek Piotr\'ow}

\date{}

\institute{Institute of Computer Science, University of Wroc\l aw\\
  Joliot-Curie 15, 50-383 Wroc\l aw, Poland\\
\email{\{karp,mpi\}@cs.uni.wroc.pl}}

\begin{document}

\pagestyle{headings}
%In order to omit page numbers and running heads
%please change this line to
%\pagestyle{empty}
%and change the first command line too, see above.

\maketitle

\begin{abstract}
  \noindent Selection comparator networks have been studied for many years.
  Recently, they have been successfully applied to encode cardinality constraints
  for SAT-solvers. To decrease the size of generated formula there is a need for
  constructions of selection networks that can be efficiently generated and
  produce networks of small sizes for the practical range of their two parameters:
  $n$ -- the number of inputs (boolean variables) and $k$ -- the number of
  selected items (a cardinality bound). In this paper we give and analyze a new
  construction of smaller selection networks that are based on the pairwise
  selection networks introduced by Codish and Zanon-Ivry. We prove also that 
  standard encodings of cardinality constraints with selection networks preserve 
  arc-consistency.
\end{abstract}

\section{Introduction} Comparator networks are probably the simplest
data-oblivious model for sorting-related algorithms. The most popular construction
is due to Batcher \cite{batcher} and it's called {\em odd-even} sorting network.
For all practical values, this is the best known sorting network. However, in 1992
Parberry \cite{parberry} introduced the serious competitor to Batcher's
construction, called {\em pairwise} sorting network. In context of sorting,
pairwise network is not better than odd-even network, in fact it has been proven
that they have exactly the same size and depth. As Parberry said himself: {\em''It
is the first sorting network to be competitive with the odd-even sort for all
values of n``}. There is a more sophisticated relation between both types of
network and their close resemblance. For overview of sorting networks, see Knuth
\cite{knuth} or Parberry \cite{parberry2}.

In recent years new applications for sorting networks have been found, for example
in encoding of {\em pseudo boolean constraints} and {\em cardinality constraints}
for SAT-solvers. Cardinality constraints take the form $x_1+x_2+\cdots + x_n \sim
k$, where $x_1,x_2,\ldots,x_n$ are boolean variables, $k$ is a natural number, and
$\sim$ is a relation from the set $\{=,<,\leq,>,\geq\}$. Cardinality constraints
are used in many applications, the significant one worth mentioning arise in SAT-solvers.
Using cardinality constraints with cooperation of SAT-solvers we can
handle many practical problems that are proven to be hard. Works of As\'in {\em et
al.} \cite{card1,card4} describe how to use odd-even sorting network to encode
cardinality constraints into boolean formulas. In \cite{card3} authors do the same
with pseudo boolean constraints.

It has already been observed that using selection networks instead of sorting
networks is more efficient for the encoding of cardinality constraints. Codish and
Zazon-Ivry \cite{card2} introduced pairwise cardinality networks, which are
networks derived from pairwise sorting networks that express cardinality
constraints. Two years later, same authors \cite{pairwise} reformulated the 
definition of pairwise selection networks and proved that their sizes are never 
worse than the sizes of corresponding odd-even selection networks. To show the 
difference they plotted it for selected values of $n$ and $k$.
  
In this paper we give a new construction of smaller selection networks that are 
based on the pairwise selection ones and we prove that the construction is correct. 
We estimate also the size of our networks and compute the difference in sizes 
between our selection networks and the corresponding pairwise ones. The difference 
can be as big as $n\log n / 2$ for $k = n/2$. Finally, we analyze the standard 
3(6)-clause encoding of a comparator and prove that such CNF encoding of any 
selection network preserves arc-consistency with respect to a corresponding 
cardinality constraint.  

The rest of the paper is organized in the following way: in Section 2 we give
definitions and notations used in this paper. In Section 3 we recall the
definition of pairwise selection networks and define auxiliary bitonic selection
networks that we will use to estimate the sizes of our networks. In Section 4 we
present the construction of our selection networks and prove its correctness. In
Section 5 we analyze the sizes of the networks and, finally, in Section 6 we
examine the arc-consistency of selection networks.
  
\section{Preliminaries}

In this section we will introduce definitions and notations used in the rest of
the paper.

  \begin{definition}[input sequence]
  Input sequence of length $n$ is a sequence of natural numbers $\bar{x}~=~\tuple{x_1,\ldots,x_n}$,
  where $x_i \in \nat$ (for all $i=1..n$). We say that $\bar{x} \in \nat^n$ is sorted if $x_i \geq x_{i+1}$ (for each $i=1..n-1$).
  Given $\bar{x}~=~\tuple{x_1,\ldots,x_n}$, $\bar{y}~=~\tuple{y_1,\ldots,y_n}$ we define concatenation as
  $\bar{x} :: \bar{y} = \tuple{x_1,\ldots,x_n,y_1,\ldots,y_n}$. We will use the following functions from $\nat^n$ to $\nat^{n/2}$:

  \[ 
    left(\bar{x})=\tuple{x_1,\ldots,x_{n/2}}, \quad\quad right(\bar{x})=\tuple{x_{n/2+1},\ldots,x_{n}} 
  \]

  \noindent Let $n,m \in \nat$. We define a relation '$\succeq$' on $\nat^n \times \nat^m$.
  Let $\bar{x}~=~\tuple{x_1,\ldots,x_n}$ and $\bar{y}~=~\tuple{y_1,\ldots,y_m}$, then:

  \[
    \bar{x} \succeq \bar{y} \iff \forall_{i \in \{1,\ldots,n\}} \forall_{j \in \{1,\ldots,m\}} \; x_i \geq y_j
  \]
  
  \end{definition}

  \begin{definition}[comparator]
  Let $\bar{x} \in \nat^n$ and let $i,j \in \nat$, where $1\leq i < j \leq n$. A comparator is a function $c_{i,j}$ defined as:
  
  \[
    c_{i,j}(\bar{x})=\bar{y} \iff y_i= \max\{x_i,x_j\} \wedge y_j= \min\{x_i,x_j\} \wedge \forall_{k \neq i,j} \; x_k=y_k
  \]
  
  \end{definition}

  \begin{definition}[comparator network]
  We say that $f^n: \nat^n \rightarrow \nat^n$ is a comparator network of order $n$, if it can be represented as the composition of
  finite number of comparators, namely, $f^n=c_{i_1,j_1} \circ \cdots \circ c_{i_k,j_k}$. The size of comparator network
  (number of comparators) is denoted by $|f^n|$. Comparator network of size 0 is denoted by $id^n$.

  \end{definition}

  \begin{wrapfigure}{r}{0.25\textwidth}
    %\begin{center}
      \centering\includegraphics[scale=0.7]{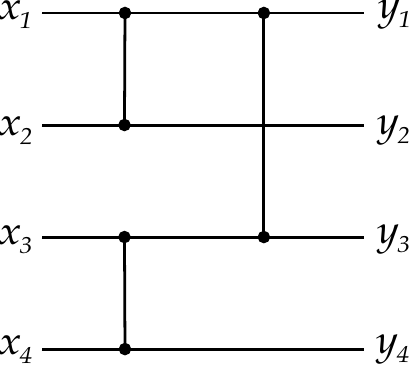}
    %\end{center}
    \caption{}
    \label{fig:maximum}
  \end{wrapfigure}

  Traditionally comparator networks are presented as circuits that receives $n$ inputs and permutate them using comparators
  connected by "wires". Each comparator has two inputs and two outputs. The "upper" output is the maximum of inputs, and
  "lower" one is minimum. As an example look at Figure \ref{fig:maximum}, where we present a comparator network of order 4, $max^4$, that
  outputs maximum from 4 inputs on its first output, namely, $y_1=\max\{x_1,x_2,x_3,x_4\}$. It is well known
  that $|max^n| = n-1$. We will often omit explicit declaration of order of comparator network when it is not ambiguous.

  \begin{definition}[bitonic sequence]
    A sequence $\bar{x} \in \nat^n$ is a bitonic sequence if $x_1 \leq \ldots \leq x_i \geq \ldots \geq x_n$ 
  for some $i$, where $1 \leq i \leq n$, or a circular shift of such sequence. We distinguish a special case of a bitonic sequence:

  \begin{itemize}
    \item {\em v-shaped}, if $x_1 \geq \ldots \geq x_i \leq \ldots \leq x_n$
  \end{itemize}
  
  \noindent and among v-shaped sequences there are two special cases:
  
  \begin{itemize}
    \item {\em nondecreasing}, if $x_1 \leq \ldots \leq x_n$,
    \item {\em nonincreasing}, if $x_1 \geq \ldots \geq x_n$.
  \end{itemize}
  \label{def:bit}
  \end{definition}

  \begin{definition}[sorting network]
    A comparator network $f^n$ is a {\em sorting network}, if for each $\bar{x} \in \nat^n$, $f^n(\bar{x})$ is sorted.
  \end{definition}

  Two types of sorting networks are of interest to us: {\em odd-even} and {\em pairwise}.
  Based on their ideas, Knuth \cite{knuth} (for odd-even network) and Codish and Zazon-Ivry \cite{pairwise} (for pairwise network)
  showed how to transform them into selection networks (we name them $oe\_sel^n_k$ and $pw\_sel^n_k$ respectively).
  
  \begin{definition}[top $k$ sorted sequence]
    A sequence $\bar{x} \in \nat^n$ is top $k$ sorted, with $k \leq n$, if $\tuple{x_1,\ldots,x_k}$ is sorted and
    $\tuple{x_1,\ldots,x_k} \succeq \tuple{x_{k+1},\ldots,x_n}$.
  \end{definition}
  
  \begin{definition}[selection network]
    A comparator network $f^n_k$ (where $k \leq n$) is a {\em selection network},
    if for each $\bar{x} \in \nat^n$, $f^n_k(\bar{x})$ is top $k$ sorted.
  \end{definition}

  To simplify the presentation we assume that $n$ and $k$ are powers of 2.

  A clause is a disjunction of literals (boolean variables $x$ or their negation $\neg x$). A CNF formula is a conjunction of one or more clauses. 
  
  A unit propagation (UP) is a process, that for given CNF formula, clauses are sought in which all literals but one are false (say $l$) and 
  $l$ is undefined (initially only clauses of size one satisfy this condition). This literal $l$ is set to true and the process is
  iterated until reaching a fix point.

  Cardinality constraints are of the form $x_1 + \ldots + x_n \sim k$, where $k \in \nat$ and $\sim$ belongs to $\{<,\leq,=,\geq,>\}$.
  We will focus on cardinality constraints with less-than relation, i.e. $x_1 + \ldots + x_n < k$.
  An encoding (a CNF formula) of such constraint preserves arc-consistency,
  if as soon as $k-1$ variables among the $x_i$'s become true, the unit propagation sets all other $x_i$'s to false.

  In \cite{card3} authors are using sorting networks for an encoding of cardinality constraints, where inputs and outputs
  of a comparator are boolean variables and comparators are encoded as a CNF formula. In addition, the $k$-th greatest output
  variable $y_k$ of the network is forced to be 0 by adding $\neg y_k$ as a clause to the formula that encodes $x_1 + \ldots + x_n < k$.
  They showed that the encoding preserves arc-consistency.

  A single comparator can be translated to a CNF formula in the following way: let $a$ and $b$ be variables denoting upper and lower
  inputs of the comparator, and $c$ and $d$ be variables denoting upper and lower outputs of a comparator, then:

  \[
  fcomp(a,b,c,d) \Leftrightarrow (c \Leftrightarrow  a \vee b) \wedge ( d \Leftrightarrow  a \wedge b)
  \]

  \noindent is the {\em full encoding} of a comparator. Notice that it consists of 6 clauses. Let $f$ be a comparator network.
  Full encoding $\phi$ of $f$ is a conjunction of full encoding of every comparator of $f$.
  
  In \cite{card4} authors observe that in case of $\sim$ being $<$ or $\leq$,
  it is sufficient to use only 3 clauses for a single comparator, namely:

  \begin{equation}
    hcomp(a,b,c,d) \Leftrightarrow \underbrace{(a \Rightarrow  c)}_{(c1)} \wedge \underbrace{(b \Rightarrow  c)}_{(c2)} \wedge \underbrace{( a \wedge  b  \Rightarrow d)}_{(c3)}
    \label{eq:hcomp}
  \end{equation}

  \noindent We call it: {\em half encoding}.
  In \cite{card4} it is used to translate odd-even sorting network
  to encoding that preserves arc-consistency. We show a more general result (with respect to both \cite{card3} and \cite{card4}), that 
  half encoding of any selection network preserves arc-consistency for the "$<$" and "$\leq$" relations.
  Similar results can be proved for the "$=$" relation using the full encoding of comparators and for the "$>$" or "$\geq$" relations
  using an encoding symmetric to $hcomp(a,b,c,d)$, namely:
  $(d \Rightarrow a) \wedge (d \Rightarrow b) \wedge ( c \Rightarrow a \vee b )$.

  \section{Pairwise and bitonic selection networks}

  Now we present two constructions for selection networks. First, we recall the
  definition of pairwise selection networks by Codish and Zazon-Ivry
  \cite{pairwise}. Secondly, we give the auxiliary construction of a {\em bitonic}
  selection network $bit\_sel^n_k$, that we will use to estimate the sizes of our
  improved pairwise selection network in Section 5.

  \begin{definition}[domination]
    $\bar{x} \in \nat^n$ dominates $\bar{y} \in \nat^n$ if $x_i \geq y_i$ (for $i=1..n$).
  \end{definition}
  
  \begin{definition}[splitter]
    A comparator network $f^n$ is a {\em splitter} if for any sequence $\bar{x}~\in~\nat^n$, if $\bar{y} = f^n(\bar{x})$,
    then $left(\bar{y})$ dominates $right(\bar{y})$.
  \end{definition}

  \begin{observation}
    We can construct splitter $split^n$ by joining inputs $\tuple{i,n/2+i}$, for $i=1..n/2$, with a comparator. Size of a splitter is
    $|split^n| = n/2$.
  \end{observation}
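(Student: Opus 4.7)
The plan is to verify the two claims in the observation separately: the size bound and the splitter property. Both follow essentially by inspection from the definitions given, so the proof will be short and mostly bookkeeping; the main task is to make sure the indexing lines up correctly with the definitions of $left$, $right$, and the dominance relation.

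First I would handle the size. The described construction is the composition $split^n = c_{1,n/2+1} \circ c_{2,n/2+2} \circ \cdots \circ c_{n/2,n}$, which by definition contains exactly $n/2$ comparators, so $|split^n|=n/2$. Note that because the index pairs $(i,n/2+i)$ are pairwise disjoint, the order of composition is irrelevant for the output, which I would mention in passing for clarity.

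Next I would verify the splitter property. Fix an arbitrary $\bar{x}\in\nat^n$ and let $\bar{y}=split^n(\bar{x})$. For each $i\in\{1,\ldots,n/2\}$, the comparator $c_{i,n/2+i}$ sets $y_i=\max\{x_i,x_{n/2+i}\}$ and $y_{n/2+i}=\min\{x_i,x_{n/2+i}\}$, while every other coordinate is untouched by this comparator. Since the pairs of indices $(i,n/2+i)$ are disjoint across different $i$, no coordinate is modified by more than one comparator, so these equalities hold in $\bar{y}$. In particular $y_i\ge y_{n/2+i}$ for every $i\in\{1,\ldots,n/2\}$.

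Finally I would translate this back to the language of the definition. By definition $left(\bar{y})=\tuple{y_1,\ldots,y_{n/2}}$ and $right(\bar{y})=\tuple{y_{n/2+1},\ldots,y_n}$, so the inequalities above say exactly that the $i$-th component of $left(\bar{y})$ is at least the $i$-th component of $right(\bar{y})$, which is the definition of domination. Hence $split^n$ is a splitter. There is no substantial obstacle here: the only thing to be careful about is not to confuse domination (coordinate-wise $\ge$) with the stronger relation $\succeq$ defined earlier (every entry on the left dominates every entry on the right), since only the former is required of a splitter.
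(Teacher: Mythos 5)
Your proof is correct and matches the paper's intent exactly: the paper states this as an unproved observation precisely because it follows by the direct inspection you carry out (disjoint comparator pairs give $y_i=\max\{x_i,x_{n/2+i}\}\ge\min\{x_i,x_{n/2+i}\}=y_{n/2+i}$, hence coordinate-wise domination, and the count of $n/2$ comparators is immediate). Your closing remark distinguishing domination from the stronger relation $\succeq$ is a worthwhile clarification, since the two are easy to conflate in this paper.
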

  
  \begin{lemma}
    If $\bar{b} \in \nat^n$ is bitonic and $\bar{y}=split^n(\bar{b})$, then $left(\bar{y})$ and $right(\bar{y})$ are bitonic and
    $left(\bar{y}) \succeq right(\bar{y})$.
    \label{lma:split_bit}
  \end{lemma}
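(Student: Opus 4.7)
The plan is to establish all three conclusions simultaneously via a zero-one reduction. A sequence $\bar x \in \nat^n$ is bitonic if and only if for every threshold $t \in \nat$ the associated $0/1$ vector $(b'_1,\ldots,b'_n)$, with $b'_i = 1$ iff $x_i \geq t$, is a bitonic $0/1$ sequence; and $\bar u \succeq \bar v$ holds if and only if for every $t$, whenever some entry of $\bar v$ is at least $t$, every entry of $\bar u$ is at least $t$. Because each comparator computes componentwise $\vee$ and $\wedge$ on $0/1$ inputs, $split^n$ commutes with thresholding, and it is enough to prove the lemma for bitonic $\bar b \in \{0,1\}^n$.

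For such a $\bar b$, the set $I = \{i : b_i = 1\}$ is a cyclic interval of $\{1,\ldots,n\}$, possibly empty or full, because Definition~\ref{def:bit} leaves at most one maximal cyclic run of $1$s. Writing $I_L = I \cap \{1,\ldots,n/2\}$ and $I_R = \{\,j - n/2 : j \in I,\; j > n/2\,\}$, the splitter produces $left(\bar y)$ with $1$s exactly on $I_L \cup I_R$ (the componentwise OR) and $right(\bar y)$ with $1$s exactly on $I_L \cap I_R$ (the componentwise AND). So it suffices to show, inside the index set $\{1,\ldots,n/2\}$, that both $I_L \cup I_R$ and $I_L \cap I_R$ are cyclic intervals, and that $I_L \cap I_R \neq \emptyset$ implies $I_L \cup I_R = \{1,\ldots,n/2\}$.

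The dominance part is actually the easier one. Positions $i$ and $n/2 + i$ are antipodal on the cycle of length $n$, so any cyclic interval containing both has length at least $n/2 + 1$; hence if $|I| \leq n/2$ then $I_L \cap I_R = \emptyset$ (dominance trivial), while if $|I| > n/2$ the complementary argument applied to $I^c$ (which has length $< n/2$) gives that for every $i$ at least one of $i, n/2+i$ lies in $I$, so $I_L \cup I_R = \{1,\ldots,n/2\}$. For the bitonicity of the two halves I would case-split on where $I$ sits relative to the midpoint $n/2$ and the endpoint $n$: either $I$ is contained in one half, straddles the midpoint, wraps around the endpoint, or does both simultaneously. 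In each placement $I_L$ and $I_R$ are (unions of) subintervals of $\{1,\ldots,n/2\}$ read off directly from the interval structure of $I$, and their union and intersection are verified to be single cyclic intervals of the half-size index set.

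The main obstacle is the bookkeeping in the wrap-around subcase corresponding to v-shaped inputs, where $I_L$ and $I_R$ each split into two disjoint pieces inside $\{1,\ldots,n/2\}$; one must check carefully that their union and their intersection reassemble into single cyclic intervals. All the remaining placements are immediate from the interval structure of $I$, so the proof reduces to this finite case check.
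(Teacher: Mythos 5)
Your proof is correct in outline, but it is worth noting that the paper does not actually prove this lemma at all: its ``proof'' is a one-line citation to Appendix~B of Batcher's paper \cite{batcher}, where the argument is carried out directly on arbitrary numeric sequences. What you do instead is the standard modern alternative: reduce to $0/1$ sequences by thresholding (all three conclusions --- bitonicity of both halves and the relation $\succeq$ --- are indeed threshold-closed, and comparators commute with thresholding since $\max$ and $\min$ become $\vee$ and $\wedge$), and then reason about the $1$-set $I$ of a bitonic $0/1$ sequence as a cyclic arc. Your antipodality argument for dominance is clean and complete. The one place you stop short is the bitonicity of the two halves, which you reduce to ``a finite case check'' on the placement of $I$; that check does go through, but you can avoid the case split entirely: $I_L \cup I_R$ is the image of the arc $I$ under the two-to-one projection $j \mapsto ((j-1) \bmod (n/2)) + 1$, and the image of an arc of length $\ell$ is an arc when $\ell < n/2$ (the projection is injective on $I$ and preserves adjacency) and is the whole set $\{1,\dots,n/2\}$ when $\ell \geq n/2$; similarly $I_L \cap I_R$ is the projection of $\{\,j : j \in I \text{ and } j + n/2 \in I\,\}$, which is itself an arc of length $|I| - n/2$ (or empty), hence projects to an arc. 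With that observation your argument is a complete, self-contained proof, arguably more transparent than the cited one, at the cost of first justifying the threshold characterization of bitonicity (a non-bitonic sequence admits four cyclically ordered positions alternating above and below some threshold, so some thresholding fails to be an arc).
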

  
  \begin{proof}
    See Appendix B of \cite{batcher}.
  \end{proof}
  
  \begin{net}[$pw\_sel^n_k$; see \cite{pairwise}]
  
  Input: any $\bar{x} \in \nat^n$.

  \begin{enumerate}
    \item If $k=1$, return $max^n(\bar{x})$.
    \item If $k=n$, return $oe\_sort^n(\bar{x})$.
    \item Compute $\bar{y} = split(\bar{x})$.
    \item Compute $\bar{l} = pw\_sel^{n/2}_k(\bar{y})$ and $\bar{r}=pw\_sel^{n/2}_{k/2}(\bar{y})$.
    \item Compute $pw\_merge^n_k(\bar{l} :: \bar{r})$.
  \end{enumerate}
  \label{net:pw}
  \end{net}

  Notice that since we introduced a splitter as the third step, in the recursive calls
  we need to select $k$ top elements from the first half of $\bar{y}$, but only $k/2$
  elements from the second half. The reason: $r_{k/2+1}$ cannot be one of the first
  $k$ largest elements of $\bar{l} :: \bar{r}$. First, $r_{k/2+1}$ is smaller than any one of $\tuple{r_1,\ldots,r_{k/2}}$
  (by the definition of top $k$ sorted sequence), and second, $\tuple{l_1,\ldots,l_{k/2}}$ dominates $\tuple{r_1,\ldots,r_{k/2}}$,
  so $r_{k/2+1}$ is smaller than any one of $\tuple{l_1,\ldots,l_{k/2}}$. From this argument we make the following observation:

  \begin{observation}
    If $\bar{l} \in \nat^{n/2}$ is top $k$ sorted, $\bar{r} \in \nat^{n/2}$ is top $k/2$ sorted and $\tuple{l_1,\ldots,l_{k/2}}$
    dominates $\tuple{r_1,\ldots,r_{k/2}}$, then $k$ largest elements of $\bar{l} :: \bar{r}$ are in
    $\tuple{l_1,\ldots,l_k} :: \tuple{r_1,\ldots,r_{k/2}}$.
    \label{obs:second_step}
  \end{observation}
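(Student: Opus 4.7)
The plan is to show that every element at a position not belonging to $\tuple{l_1,\ldots,l_k} :: \tuple{r_1,\ldots,r_{k/2}}$ is dominated (in value) by at least $k$ elements of this ``candidate'' subsequence, so that the $k$ largest values of $\bar{l}::\bar{r}$ can always be selected from within it.

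First I would dispose of the easy case: any element $l_j$ with $j>k$. Because $\bar{l}$ is top $k$ sorted, by definition $\tuple{l_1,\ldots,l_k} \succeq \tuple{l_{k+1},\ldots,l_{n/2}}$, so $l_j \leq l_i$ for every $i=1,\ldots,k$. That already exhibits $k$ elements of the candidate subsequence (namely the full prefix of $\bar{l}$) that are $\geq l_j$, so $l_j$ cannot force its way into the top~$k$.

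The substantive case is any element $r_j$ with $j > k/2$, and here all three hypotheses have to be chained together. Since $\bar{r}$ is top $k/2$ sorted, $\tuple{r_1,\ldots,r_{k/2}}$ is sorted in nonincreasing order and dominates $\tuple{r_{k/2+1},\ldots,r_{n/2}}$; in particular $r_j \leq r_{k/2}$ and $r_j \leq r_i$ for every $i \leq k/2$. This gives $k/2$ candidate elements $r_1,\ldots,r_{k/2}$ that are $\geq r_j$. To get another $k/2$, I would use the domination hypothesis $\tuple{l_1,\ldots,l_{k/2}}$ dominates $\tuple{r_1,\ldots,r_{k/2}}$, which yields $l_{k/2} \geq r_{k/2} \geq r_j$; then the top $k$ sortedness of $\bar{l}$ gives $l_1 \geq l_2 \geq \cdots \geq l_{k/2}$, so $l_i \geq l_{k/2} \geq r_j$ for every $i = 1,\ldots,k/2$. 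Combined, the $k$ positions $l_1,\ldots,l_{k/2},r_1,\ldots,r_{k/2}$ in the candidate subsequence all carry values $\geq r_j$.

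Finally I would wrap up: every element of $\bar{l}::\bar{r}$ outside $\tuple{l_1,\ldots,l_k}::\tuple{r_1,\ldots,r_{k/2}}$ is $\leq$ at least $k$ elements of the candidate subsequence, hence no such element belongs to a choice of the $k$ largest values, which means some $k$ of the candidate elements form the top $k$ of $\bar{l}::\bar{r}$. There is no real obstacle here; the only point that requires a little care is the chain $r_j \leq r_{k/2} \leq l_{k/2} \leq l_i$ for $i \leq k/2$, which is exactly where the domination hypothesis between the prefixes of length $k/2$ (rather than $k$) is used in an essential way.
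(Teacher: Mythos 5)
Your proof is correct and follows essentially the same route as the paper's informal justification preceding the observation: each $r_j$ with $j>k/2$ is shown to be below the $k$ elements $l_1,\ldots,l_{k/2},r_1,\ldots,r_{k/2}$ by chaining top-$k/2$-sortedness of $\bar{r}$ with the domination of $\tuple{r_1,\ldots,r_{k/2}}$ by $\tuple{l_1,\ldots,l_{k/2}}$, while the case $l_j$, $j>k$, is immediate from top-$k$-sortedness of $\bar{l}$. Your write-up is in fact slightly more complete, as the paper only argues the $r_{k/2+1}$ case explicitly.
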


  The last step of Network \ref{net:pw} merges $k$ top elements from $\bar{l}$ and $k/2$ top elements from $\bar{r}$
  with so called {\em pairwise merger}. We will omit the construction of this merger, because it is not relevant to our work.
  We would only like to note, that its size is: $|pw\_merge^n_k|=k\log k - k + 1$.
  Construction of the merger as well as the detailed proof of correctness
  of network $pw\_sel^n_k$ can be found in Section 6 of \cite{pairwise}.

  \begin{definition}[bitonic splitter]
    A comparator network $f^n$ is a bitonic splitter if for any two sorted sequences $\bar{x},\bar{y}~\in~\nat^{n/2}$,
    if $\bar{z} = bit\_split^n(\bar{x}::\bar{y})$, then (1) $left(\bar{z}) \succeq right(\bar{z})$
    and (2) $left(\bar{z})$ and $right(\bar{z})$ are bitonic.
    \label{def:bitonic_splitter}
  \end{definition}

  \begin{observation}
    We can construct bitonic splitter $bit\_split^n$ by joining inputs $\tuple{i,n-i+1}$, for $i=1..n/2$, with a comparator.
    Size of a bitonic splitter is $|bit\_split^n| = n/2$.
  \end{observation}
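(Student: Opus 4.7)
The plan is to reduce this observation to Lemma~\ref{lma:split_bit} via a conceptual reversal of $\bar{y}$. Since both $\bar{x}$ and $\bar{y}$ are sorted (nonincreasing) of length $n/2$, the concatenated sequence $\bar{x} :: \tuple{y_{n/2},\ldots,y_1}$ is V-shaped and hence bitonic by Definition~\ref{def:bit}. Observe that the pair $(i,n-i+1)$ joined by a comparator in $bit\_split^n$ holds precisely the values $x_i$ and $y_{n/2-i+1}$, which are exactly the two values that the ordinary $split^n$ compares at its $i$-th pair $(i,n/2+i)$ when acting on $\bar{x} :: \tuple{y_{n/2},\ldots,y_1}$.

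Let $\bar{w}$ denote the output of $split^n$ on $\bar{x} :: \tuple{y_{n/2},\ldots,y_1}$ and $\bar{z}$ the output of $bit\_split^n$ on $\bar{x} :: \bar{y}$. I would then perform a short index translation to establish $left(\bar{z}) = left(\bar{w})$, while $right(\bar{z})$ is the positional reverse of $right(\bar{w})$: in both constructions the maxima from the $n/2$ pairs land at positions $1,\ldots,n/2$, whereas $bit\_split^n$ places the corresponding minima at positions $n,n-1,\ldots,n/2+1$, and $split^n$ places them at positions $n/2+1,\ldots,n$.

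With this identification in hand, Lemma~\ref{lma:split_bit} applied to the bitonic sequence $\bar{x} :: \tuple{y_{n/2},\ldots,y_1}$ yields that $left(\bar{w})$ and $right(\bar{w})$ are bitonic and $left(\bar{w}) \succeq right(\bar{w})$. Property~(2) of Definition~\ref{def:bitonic_splitter} then follows because $left(\bar{z}) = left(\bar{w})$ is bitonic and reversing a bitonic sequence yields a bitonic sequence (the V/inverted-V pattern of Definition~\ref{def:bit} is closed under reversal, which merely flips the index of the peak). Property~(1) follows because $\succeq$ is defined by a universally quantified coordinatewise comparison over all pairs of positions, hence is invariant under any permutation of either side.

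The main obstacle is purely index bookkeeping: carefully tracking the substitution $i \mapsto n-i+1$ that translates $bit\_split^n$ on $\bar{x}::\bar{y}$ into $split^n$ on $\bar{x} :: \tuple{y_{n/2},\ldots,y_1}$. No deeper combinatorial argument is needed, since Lemma~\ref{lma:split_bit} supplies the substantive content and the two closure properties (bitonicity under reversal and $\succeq$ under permutation) are immediate from the definitions. The size bound $|bit\_split^n|=n/2$ is immediate from counting: exactly one comparator per pair, for $n/2$ pairs.
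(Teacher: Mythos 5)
The paper offers no proof of this Observation at all --- it is stated as self-evident (the underlying fact is Batcher's classic first stage of the bitonic merger), so there is no ``paper route'' to compare against. Your blind argument is correct and complete, and it is a sensible way to discharge the claim using only material the paper already has. The index bookkeeping checks out: since ``sorted'' here means nonincreasing, $\bar{x} :: \tuple{y_{n/2},\ldots,y_1}$ is indeed v-shaped (nonincreasing then nondecreasing), hence bitonic by Definition~\ref{def:bit}; the pair $(i,\,n-i+1)$ of $bit\_split^n$ on $\bar{x}::\bar{y}$ and the pair $(i,\,n/2+i)$ of $split^n$ on the reversed concatenation both compare $x_i$ with $y_{n/2-i+1}$, and in both cases the maximum lands at position $i$, so $left(\bar{z})=left(\bar{w})$ while $right(\bar{z})$ is the positional reverse of $right(\bar{w})$. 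Lemma~\ref{lma:split_bit} then gives everything, because bitonicity is preserved under reversal (a rise-then-fall pattern reversed is again rise-then-fall, and circular shifts are handled uniformly) and $\succeq$ is a statement about all pairs of elements, hence permutation-invariant. The size count $n/2$ is immediate. One could alternatively argue directly that $z_i=\max\{x_i,y_{n/2-i+1}\}$ is the pointwise maximum of a nonincreasing and a nondecreasing sequence (hence v-shaped) and similarly for the minima, which avoids invoking Lemma~\ref{lma:split_bit} and its external reference to Batcher's Appendix~B; but your reduction is equally valid and arguably better matched to the toolkit the paper sets up.
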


  We now present the procedure for construction of the bitonic selection network. We use the odd-even sorting network $oe\_sort$
  and the network $bit\_merge$ (also by Batcher \cite{batcher}) for sorting bitonic sequences as black-boxes. As a reminder: $bit\_merge^n$
  consists of two steps, first we use $\bar{y} = split^n(\bar{x})$,
  then recursively compute $bit\_merge^{n/2}$ for $left(\bar{y})$ and $right(\bar{y})$
  (base case, $n=2$, consists of a single comparator). Size of this network is: $|bit\_merge^n|=n\log n/2$.

  Bitonic selection network $bit\_sel^n_k$ is constructed by the following procedure.

  \begin{net}[$bit\_sel^n_k$]
  Input: any $\bar{x} \in \nat^n$.

  \begin{enumerate}
    \item Let $l=n/k$. Partition input $\bar{x}$ into $l$ consecutive blocks, each of size $k$,
      then sort each block with $oe\_sort^k$, obtaining $B_1,\ldots,B_l$.
    \item While $l>1$, do the following:
      \begin{enumerate}
        \item Collect blocks into pairs $\tuple{B_1,B_2},\ldots,\tuple{B_{l-1},B_{l}}$.
        \item Compute $\bar{y_i} = bit\_split^{2k}(B_i :: B_{i+1})$ for each $i \in \{1,3,\ldots,l-1\}$.
        \item Compute $B_{\lceil i/2 \rceil}' = bit\_merge^k(left(\bar{y_i}))$ for each result of previous step.
        \item Let $l=l/2$. Relabel $B_i'$ to $B_i$, for $1 \leq i \leq l$.
      \end{enumerate}
  \end{enumerate}
  \label{net:bit}
  \end{net}

  \begin{theorem}
    A comparator network $bit\_sel^n_k$ constructed by the procedure Network \ref{net:bit} is a selection network.
  \end{theorem}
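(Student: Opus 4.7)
The plan is to prove by induction on iterations of the while loop in Network \ref{net:bit} the following invariant: after $t \geq 0$ iterations there are $l = n/(2^t k)$ blocks $B_1,\ldots,B_l$, each of length $k$, such that for every $j$: (i) $B_j$ is sorted; (ii) $B_j$ equals, as a multiset, the top $k$ values among the original inputs at positions $(j-1)2^t k + 1$ through $j \cdot 2^t k$; and (iii) every value currently on a wire in that index range but outside $B_j$ (a discarded value) is at most $\min(B_j)$. For the base case $t=0$, Step 1 sorts each length-$k$ range with $oe\_sort^k$, so (i) and (ii) are immediate and (iii) is vacuous.

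For the inductive step I would fix a pair $(B_{2j-1}, B_{2j})$ and let $\bar{y} = bit\_split^{2k}(B_{2j-1} :: B_{2j})$. Since both operands are sorted by induction, Definition \ref{def:bitonic_splitter} gives $left(\bar{y}) \succeq right(\bar{y})$ with both halves bitonic. Because comparators preserve multisets and $|left(\bar{y})| = k$, this domination forces $left(\bar{y})$ to equal, multiset-wise, the top $k$ of $B_{2j-1} \cup B_{2j}$. Feeding the bitonic $left(\bar{y})$ into $bit\_merge^k$ then yields a sorted sequence of length $k$ with the same multiset, which is the new block $B'_j$; this proves (i), and combining with inductive (ii) on both operands gives (ii) for $B'_j$ over the combined range of length $2^{t+1}k$.

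The hardest part will be reinstating (iii) for the new block. Values freshly discarded in $right(\bar{y})$ satisfy the bound by $left(\bar{y}) \succeq right(\bar{y})$. For values discarded in prior iterations within, say, the $B_{2j-1}$ range, induction gives they are at most $\min(B_{2j-1})$, and I then need to show $\min(B_{2j-1}) \leq \min(B'_j)$. This follows from a counting argument: $B_{2j-1}$ contributes $k$ elements to the combined $2k$-multiset, all $\geq \min(B_{2j-1})$, so the $k$-th largest (which is $\min(B'_j)$) is $\geq \min(B_{2j-1})$; the argument for the $B_{2j}$ range is symmetric. Once the loop terminates after $\log(n/k)$ iterations, the single sorted block $B_1$ sits on wires $1,\ldots,k$ and is the top $k$ of all $n$ inputs while dominating the values on wires $k+1,\ldots,n$, which is precisely a top-$k$ sorted output.
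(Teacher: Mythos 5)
Your proof is correct and follows essentially the same route as the paper's: induction on the iterations of the while loop, using the bitonic-splitter guarantees ($left(\bar{y_i}) \succeq right(\bar{y_i})$ and bitonicity of the kept half) to show that each merge step retains the top $k$ elements of the combined range in sorted order, so that after $\log(n/k)$ iterations the single remaining block is the sorted top $k$ of the input. Your per-range invariant is just a localized restatement of the paper's global one, and what you flag as the hardest part, condition (iii), is in fact immediate from (ii): since comparators preserve the multiset of values on the wires of each range, everything outside a block that holds the top $k$ of its range is automatically bounded by that block's minimum.
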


  \begin{proof}
    Let $\bar{x} \in \nat^n$ be the input to $bit\_sel^n_k$. After step one we get sorted sequences $B_1,\ldots,B_l$, where $l=n/k$.
    Let $l_m$ be the value of $l$ after $m$ iterations. Let $B^m_1,\ldots,B^m_{l_m}$ be the blocks after $m$ iterations. We will prove
    by induction that:

    \begin{center}
    {\em $P(m)$: if $B_1,\ldots,B_{l}$ are sorted and are containing $k$ largest elements of $\bar{x}$,
      then after $m$-th iteration of the second step: $l_m=l/2^m$, $B^m_1,\ldots,B^m_{l_m}$ are sorted and are containing
      $k$ largest elements of $\bar{x}$.}
    \end{center}

    \noindent If $m=0$, then $l=1$, so $P(m)$ holds. We show that $\forall_{m\geq 0}$ $(P(m) \Rightarrow P(m+1))$.
    Consider $(m+1)$-th iteration of step two. By the induction hypothesis $l_m=l/2^m$, $B^m_1,\ldots,B^m_{l_m}$
    are sorted and are containing $k$ largest elements of $\bar{x}$.  We will show that
    $(m+1)$-th iteration does not remove any element from $k$ largest elements of $\bar{x}$. To see this, notice that if
    $\bar{y_i} = bit\_split^{2k}(B^m_i :: B^m_{i+1})$ (for $i \in \{1,3,\ldots,l_m-1\}$), then $left(\bar{y_i}) \succeq right(\bar{y_i})$ and
    that $left(\bar{y_i})$ is bitonic (by Definition \ref{def:bitonic_splitter}). Because of those two facts, $right(\bar{y_i})$ is
    discarded and $left(\bar{y_i})$ is sorted using $bit\_merge^k$. After this, $l_{m+1}=l_m/2=l/2^{m+1}$ and 
    blocks $B^{m+1}_1,\ldots,B^{m+1}_{l_{m+1}}$ are sorted. Thus $P(m+1)$ is true.

    Since $l=n/k$, then by $P(m)$ we see that the second step will terminate after $m=\log \frac{n}{k}$
    iterations and that $B_1$ is sorted and contains $k$ largest elements of $\bar{x}$.
  \end{proof}

  \begin{figure}[ht]
    \begin{center}
      \includegraphics[scale=0.5]{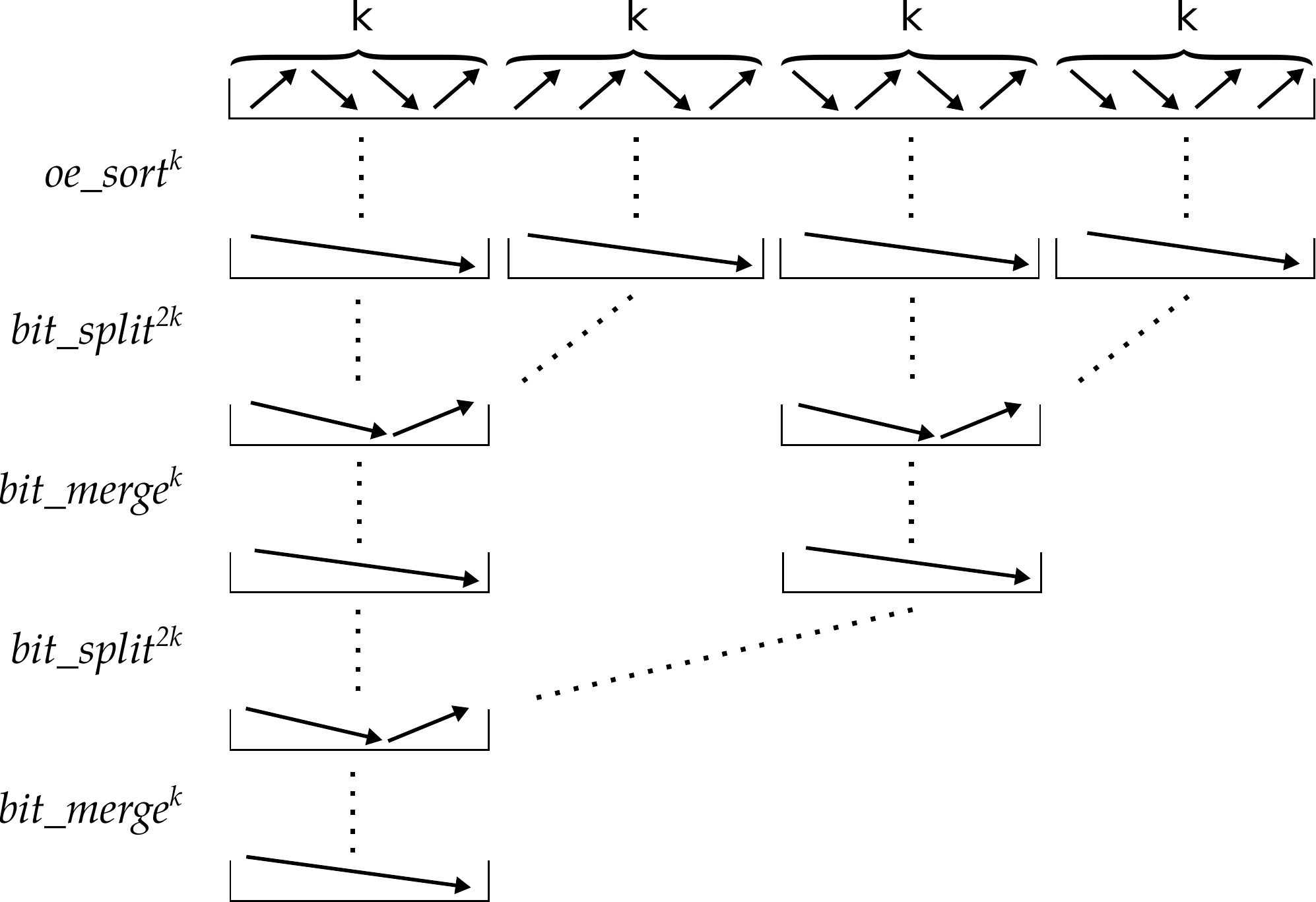}
    \end{center}
    \caption{Bitonic selection network -- schema of construction}
    \label{fig:bit-sel}
  \end{figure}

  Schema of construction of bitonic selection network is shown in Figure \ref{fig:bit-sel}. The size of bitonic selection network is:

  \begin{align}
    |bit\_sel^n_k| &= \frac{n}{k}|oe\_sort^k| + \left(\frac{n}{k}-1\right)(|bit\_split^{2k}|+|bit\_merge^k|) \notag \\ 
    &= \frac{1}{4}n\log^2k + \frac{1}{4}n\log k + 2n - \frac{1}{2}k\log k - k - \frac{n}{k} \label{eq:bit}
  \end{align}

  In Figure \ref{fig:bit-vs-pw} we present bitonic and pairwise selection networks for $n=8$ and $k=2$.

  \begin{figure}[ht]
    \begin{center}
      \subfloat[\label{fig:oe-sel}]{%
      \includegraphics[]{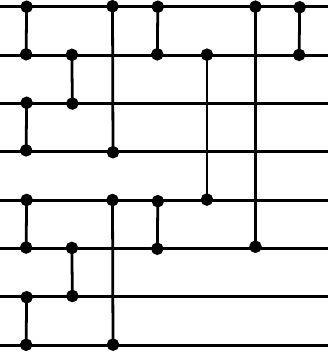}
      }
      \hfill
      \subfloat[\label{fig:pw-sel}]{%
      \includegraphics[]{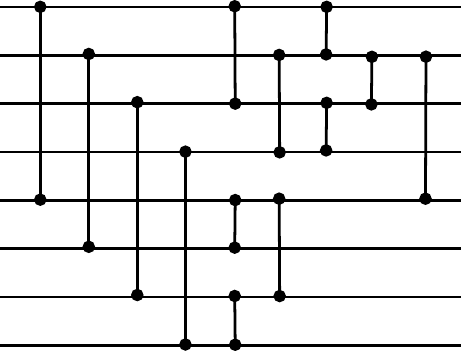}
      }
    \end{center}
    \caption{a) bitonic selection network; b) pairwise selection network; $n=8$, $k=2$.}\label{fig:bit-vs-pw}
  \end{figure}

\section{New Smaller Selection Networks}

  As mentioned in the previous section, only
  the first $k/2$ elements from the second half of the input are relevant when we get to the merging step in $pw\_sel^n_k$.
  We will exploit this fact to create a new, smaller merger.
  We will use the concept of bitonic sequences, therefore the new merger will be called $pw\_bit\_merge^n_k$ 
  and the new selection network: $pw\_bit\_sel^n_k$. The network $pw\_bit\_sel^n_k$ is generated by substituting the last step
  of $pw\_sel^n_k$ with $pw\_bit\_merge^n_k$. The new merger consists of two steps:

  \begin{net}[$pw\_bit\_merge^n_k$]

    Input: $\bar{l} :: \bar{r}$, where $\bar{l} \in \nat^{n/2}$ is top $k$ sorted and $\bar{r} \in \nat^{n/2}$
    is top $k/2$ sorted and $\tuple{l_1,\ldots,l_{k/2}}$ dominates $\tuple{r_1,\ldots,r_{k/2}}$.

    \begin{enumerate}
      \item Compute $\bar{y}=bit\_split^k(l_{k/2+1},\ldots,l_k,r_1,\ldots,r_{k/2})$,
        let $\bar{b}=\tuple{l_1,\ldots,l_{k/2}} :: \tuple{y_1,\ldots,y_{k/2}}$.
      \item Compute $bit\_merge^k(\bar{b})$.
    \end{enumerate}
    \label{net:pw_merge}
  \end{net}
  
  \begin{theorem}
    The output of Network \ref{net:pw_merge} consists of sorted $k$ largest elements from input $\bar{l} :: \bar{r}$, assuming
    that $\bar{l} \in \nat^{n/2}$ is top $k$ sorted and $\bar{r} \in \nat^{n/2}$
    is top $k/2$ sorted and $\tuple{l_1,\ldots,l_{k/2}}$ dominates $\tuple{r_1,\ldots,r_{k/2}}$.
    \label{thm:pw_merge}
  \end{theorem}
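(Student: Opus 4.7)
The plan is to prove Theorem \ref{thm:pw_merge} in two stages matching the two steps of Network \ref{net:pw_merge}: first show that after Step 1 the sequence $\bar{b}$ contains, as a multiset, exactly the $k$ largest elements of $\bar{l} :: \bar{r}$, and then show that $\bar{b}$ is bitonic so that $bit\_merge^k$ sorts it correctly.

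For the first stage, I would invoke Observation \ref{obs:second_step} to confine the top $k$ of $\bar{l} :: \bar{r}$ to $M \cup N$, where $M = \tuple{l_1,\ldots,l_{k/2}}$ and $N = \tuple{l_{k/2+1},\ldots,l_k, r_1,\ldots,r_{k/2}}$. Both halves of the input to $bit\_split^k$ are sorted, so Definition \ref{def:bitonic_splitter} yields $\tuple{y_1,\ldots,y_{k/2}} \succeq \tuple{y_{k/2+1},\ldots,y_k}$; hence $\tuple{y_1,\ldots,y_{k/2}}$ holds the top $k/2$ of $N$. To conclude $\bar{b} = M :: \tuple{y_1,\ldots,y_{k/2}}$ equals the top $k$ of $M \cup N$, I need to rule out any discarded element (from the bottom $k/2$ of $N$) exceeding some element of $M$. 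Using sortedness of $\bar{l}$ together with the dominance hypothesis, the $k/2 + 1$ elements $l_{k/2+1},\ldots,l_k,r_{k/2}$ are all at most $l_{k/2}$; so the $k/2$ smallest elements of $N$ are $\leq l_{k/2}$, and since every element of $M$ is $\geq l_{k/2}$ by sortedness, the desired containment follows.

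The second stage is the main obstacle, since Definition \ref{def:bitonic_splitter} only guarantees that $left(\bar{y})$ is bitonic in some orientation, while $bit\_merge^k$ requires $\bar{b}$ as a whole to be bitonic. I would refine the analysis using the explicit structure of $bit\_split^k$: the comparator joining positions $i$ and $k-i+1$ produces $y_i = \max(l_{k/2+i}, r_{k/2-i+1})$ for $i = 1, \ldots, k/2$. As $i$ grows, $l_{k/2+i}$ is nonincreasing while $r_{k/2-i+1}$ is nondecreasing, so the pointwise maximum $\tuple{y_1,\ldots,y_{k/2}}$ is v-shaped in the sense of Definition \ref{def:bit}. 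Since $l_{k/2} \geq l_{k/2+1}$ by sortedness and $l_{k/2} \geq r_{k/2}$ by domination, we get $l_{k/2} \geq y_1$, so prepending the nonincreasing $\tuple{l_1,\ldots,l_{k/2}}$ to this v-shaped sequence still yields a v-shaped, hence bitonic, $\bar{b}$. Applying $bit\_merge^k$ to $\bar{b}$ then produces the $k$ largest elements of $\bar{l} :: \bar{r}$ in sorted order, completing the proof.
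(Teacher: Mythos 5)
Your proposal is correct and follows essentially the same route as the paper's proof: both establish that $\bar{b}$ is v-shaped by locating the crossover between the $l$-part and the $r$-part of the splitter outputs (with $l_{k/2}\geq y_1$ gluing the two halves), and both reduce the top-$k$ claim to Observation \ref{obs:second_step} together with the bitonic-splitter property and the bound that every discarded element is at most $l_{k/2}$. The only cosmetic differences are your counting argument for the bottom half of $N$ (the paper bounds $\min\{l_j,r_{k-j+1}\}$ by $l_{k/2}$ directly) and your phrasing of v-shapedness as a pointwise maximum of a nonincreasing and a nondecreasing sequence.
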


  \begin{proof}
    We have to prove two things: (1) $\bar{b}$ is bitonic and (2) $\bar{b}$ consists of $k$ largest elements from $\bar{l} :: \bar{r}$.

    (1) Let $j$ be the last index in the sequence $\tuple{k/2+1,\ldots,k}$, for which $l_j > r_{k-j+1}$. If such $j$ does not exist,
    then $\tuple{y_1,\ldots,y_{k/2}}$ is nondecreasing, hence $\bar{b}$ is bitonic (nondecreasing). Assume that $j$ exists, then
    $\tuple{y_{j-k/2+1},\ldots,y_{k/2}}$ is nondecreasing and $\tuple{y_1,\ldots,y_{k-j}}$ is nonincreasing. Adding the fact that
    $l_{k/2} \geq l_{k/2+1} = y_1$ proves, that $\bar{b}$ is bitonic (v-shaped).

    (2) By Observation \ref{obs:second_step}, it is sufficient to prove that $\bar{b} \succeq \tuple{y_{k/2+1},\ldots,y_k}$.
    Since $\forall_{k/2<j\leq k}$ $l_{k/2} \geq l_j \geq \min\{l_j, r_{k-j+1}\}=y_{3k/2-j+1}$, then
    $\tuple{l_1,\ldots,l_{k/2}} \succeq \tuple{y_{k/2+1},\ldots,y_k}$ and by Definition \ref{def:bitonic_splitter}:
    $\tuple{y_1,\ldots,y_{k/2}} \succeq \tuple{y_{k/2+1},\ldots,y_k}$. 
    Therefore $\bar{b}$ consists of $k$ largest elements from $\bar{l} :: \bar{r}$.

    The bitonic merger in step 2 receives a bitonic sequence, so it outputs a sorted sequence, which completes the proof.
  \end{proof}

  The first step of improved pairwise merger is illustrated in Figure \ref{fig:pw_bit}.
  We use $k/2$ comparators in the first step and $k\log k/2$
  comparators in the second step. We get a merger of size $k\log k/2 + k/2$, which is better
  than the previous approach. In the following it is shown that we can do even better and eliminate $k/2$ term.

  \begin{figure}[ht]
    \begin{center}
      \includegraphics[width=\textwidth]{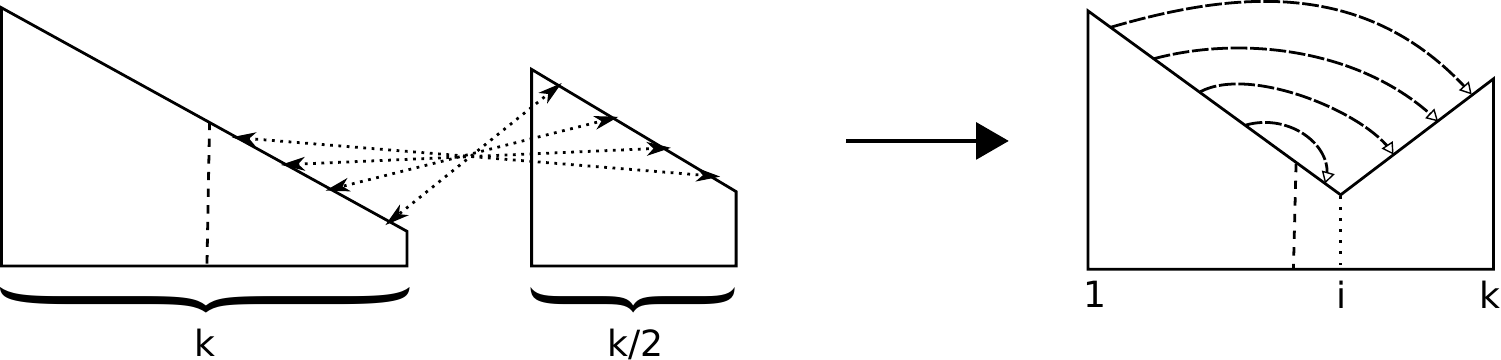}
    \end{center}
    \caption{Making the bitonic sequence. Arrows on the right picture show directions of inequalities.
      Sequence on the right is v-shape s-dominating at point $i$.}
    \label{fig:pw_bit}
  \end{figure}

  The main observation is that the result of the first step of $pw\_bit\_merge$ operation: $\tuple{b_1,b_2,\ldots,b_k}$ is not only bitonic,
  but what we call {\em v-shape s-dominating}.

  \begin{definition}[s-domination]
    A sequence $\bar{b} =\tuple{b_1,b_2,\ldots,b_k}$ is s-dominating if 
    $\forall_{1\leq j \leq k/2}$ $b_j \geq b_{k-j+1}$.
    \label{def:sd}
  \end{definition} 

  \begin{lemma}
    If $\bar{b} =\tuple{b_1,b_2,\ldots,b_k}$ is v-shaped and s-dominating, then $\bar{b}$ is nonincreasing
    or $\exists_{k/2<i<k} \; b_i < b_{i+1}$.
    \label{lma:sd}
  \end{lemma}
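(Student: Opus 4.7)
The statement says that in a v-shaped, s-dominating sequence, any strict increase must occur strictly past the midpoint. So the natural plan is to prove the contrapositive form: assume $\bar{b}$ is not nonincreasing, let $i^{*}$ be the \emph{largest} index with $b_{i^{*}} < b_{i^{*}+1}$, and show $i^{*} > k/2$ (the inequality $i^{*} < k$ is automatic because $b_{i^{*}+1}$ must exist).

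The first step is to pin down the structure of $\bar{b}$ past index $i^{*}$. By maximality of $i^{*}$ we have $b_j \geq b_{j+1}$ for every $j > i^{*}$, so the tail $b_{i^{*}+1},\ldots,b_k$ is nonincreasing. On the other hand, since $b_{i^{*}} < b_{i^{*}+1}$, the v-shape turning point $p$ from Definition \ref{def:bit} must satisfy $p \leq i^{*}$, which forces $b_{i^{*}+1} \leq b_{i^{*}+2} \leq \cdots \leq b_k$. Combining both directions, the tail is constant: $b_{i^{*}+1} = b_{i^{*}+2} = \cdots = b_k$. In particular $b_{i^{*}} < b_k$.

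The second step is to exploit s-domination to get a contradiction from $i^{*} \leq k/2$. I would split into two subcases. If $i^{*} < k/2$, then $k - i^{*} + 1 > i^{*} + 1$, so $b_{k-i^{*}+1}$ lies inside the constant tail, giving $b_{k-i^{*}+1} = b_{i^{*}+1}$; then Definition \ref{def:sd} applied with $j = i^{*}$ yields $b_{i^{*}} \geq b_{k-i^{*}+1} = b_{i^{*}+1}$, contradicting $b_{i^{*}} < b_{i^{*}+1}$. If instead $i^{*} = k/2$, Definition \ref{def:sd} applied with $j = k/2$ gives directly $b_{k/2} \geq b_{k/2+1}$, again contradicting $b_{i^{*}} < b_{i^{*}+1}$. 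Either way we reach a contradiction, so $i^{*} > k/2$ and the lemma follows.

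The only mildly delicate point is the boundary case $i^{*} = k/2$: the "interior" argument using index $j = i^{*}$ in s-domination degenerates there (it compares $b_{k/2}$ with $b_{k/2+1}$, which is exactly the strict-increase pair), so one needs to notice that this degeneration is itself the contradiction rather than treating it as a failure of the argument. Everything else is straightforward bookkeeping of the v-shape and maximality conditions; no new ideas beyond Definitions \ref{def:bit} and \ref{def:sd} should be needed.
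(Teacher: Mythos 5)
Your proof is correct and follows essentially the same route as the paper: assume an increase $b_j < b_{j+1}$ occurs at some $j \leq k/2$, use the v-shape to push that increase up to index $k-j+1$, and contradict s-domination $b_j \geq b_{k-j+1}$. The extra machinery (choosing the \emph{largest} such index and deriving a constant tail) is harmless but unnecessary — the paper gets the same contradiction directly from the nondecreasing chain $b_{j+1} \leq \cdots \leq b_{k-j+1}$ for an arbitrary such $j$.
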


  \begin{proof}
    Assume that $\bar{b}$ is not nonincreasing. Then $\exists_{1\leq j<k} \; b_j < b_{j+1}$. Assume that $j\leq k/2$. Since
    $\bar{b}$ is v-shaped, $b_{j+1}$ must be in nondecreasing part of $\bar{b}$. If follows that
    $b_j < b_{j+1} \leq \ldots \leq b_{k/2} \leq \ldots \leq b_{k-j+1}$. That means that $b_j<b_{k-j+1}$. On the other hand, $\bar{b}$
    is s-dominating, thus $b_j \geq b_{k-j+1}$ -- a contradiction.
  \end{proof}
  
  We will say that a sequence $\bar{b}$ is {\em v-shape s-dominating at point $i$} if $i$ is the smallest index greater than
  $k/2$ such that $b_i < b_{i+1}$ or $i=k$ for a nonincreasing sequence.

  \begin{lemma}
    Let $\bar{b}=\tuple{b_1,b_2,\ldots,b_k}$ be v-shape s-dominating at point $i$, then
    $\tuple{b_1,\ldots,b_{k/4}} \succeq \tuple{b_{k/2+1},\ldots,b_{3k/4}}$.
    \label{lma:transit}
  \end{lemma}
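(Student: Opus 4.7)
My plan is to fix arbitrary indices $p \in \{1, \ldots, k/4\}$ and $q \in \{k/2+1, \ldots, 3k/4\}$ and prove $b_p \geq b_q$ by a short two-case split. Since $\bar{b}$ is v-shaped, I will introduce $j^*$ as an index where the minimum of $\bar{b}$ is attained, so that $b_1 \geq \cdots \geq b_{j^*} \leq \cdots \leq b_k$. The cases depend on whether $q$ lies in the nonincreasing prefix of $\bar{b}$ (i.e.\ $q \leq j^*$) or in its nondecreasing suffix (i.e.\ $q > j^*$).

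In the first case, both $p$ and $q$ lie in the nonincreasing prefix, with $p \leq k/4 < k/2+1 \leq q \leq j^*$, so monotonicity yields $b_p \geq b_q$ immediately.

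In the second case, I will invoke s-domination (Definition \ref{def:sd}) at index $p$, which is applicable because $p \leq k/4 \leq k/2$, obtaining $b_p \geq b_{k-p+1}$. Because $k - p + 1 \geq 3k/4 + 1 > q > j^*$, the index $k-p+1$ also lies in the nondecreasing suffix, strictly to the right of $q$, so suffix monotonicity gives $b_{k-p+1} \geq b_q$. Chaining the two inequalities yields $b_p \geq b_q$, completing the case.

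Two remarks. First, the conclusion does not actually depend on the specific point $i$; only the v-shape and s-domination properties of $\bar{b}$ are used, so Lemma \ref{lma:sd} is implicit but not explicitly invoked. Second, there is no real obstacle beyond careful bookkeeping of indices: the only step needing attention is verifying in the second case that $k - p + 1$ falls past $j^*$, which follows from the chain $k - p + 1 \geq 3k/4 + 1 > 3k/4 \geq q > j^*$; once this is in hand, the s-domination bound can be composed with the monotonicity of the nondecreasing tail.
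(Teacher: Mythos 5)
Your proof is correct and takes essentially the same approach as the paper's: a case split on whether the relevant index of $\tuple{b_{k/2+1},\ldots,b_{3k/4}}$ lies in the nonincreasing or the nondecreasing branch of the v, with prefix monotonicity handling the first case and s-domination ($b_p \geq b_{k-p+1}$) composed with the nondecreasing tail handling the second. The only differences are cosmetic: you argue pointwise in $p,q$ and case on the valley position rather than on the point $i$, which lets you bypass the explicit appeal to Lemma \ref{lma:sd}.
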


  \begin{proof}
    If $\bar{b}$ is nonincreasing, then the lemma holds. From Lemma \ref{lma:sd}: $k/2<i<k$. 
    If $i > 3k/4$, then by Definition \ref{def:bit}: $b_1 \geq \ldots \geq b_{3k/4} \geq \ldots \geq b_i$, so lemma holds.
    If $k/2 < i \leq 3k/4$, then by Definition \ref{def:bit}:  $b_1 \geq \ldots \geq b_i$,
    so $\tuple{b_1,\ldots,b_{k/4}} \succeq \tuple{b_{k/2+1},\ldots,b_{i}}$. Since $b_i < b_{i+1} \leq \ldots \leq b_{3k/4}$,
    it suffices to prove that $b_{k/4} \geq b_{3k/4}$.
    By Definition \ref{def:sd} and \ref{def:bit}: $b_{k/4} \geq b_{3k/4+1} \geq b_{3k/4}$.
  \end{proof}

  \begin{definition}[half splitter]
    A {\em half splitter} is a comparator network constructed by comparing inputs 
    $\tuple{k/4+1,3k/4+1},\ldots,\tuple{k/2,k}$ (normal splitter with first $k/4$ comparators removed).
    We will call it $half\_split^k$.
    \label{def:hs}
  \end{definition}

  \begin{lemma}
    If $\bar{b}$ is v-shape s-dominating, then $half\_split^k(\bar{b})=split^k(\bar{b})$.
    \label{lma:hs_s}
  \end{lemma}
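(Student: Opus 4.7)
The plan is to observe that $half\_split^k$ is obtained from $split^k$ by removing exactly the first $k/4$ comparators, namely those that compare the pairs $(i,\,k/2+i)$ for $i = 1,\ldots,k/4$. Since the comparators that remain (those acting on pairs $(k/4+i,\,3k/4+i)$ for $i=1,\ldots,k/4$) operate on wires disjoint from the removed ones, the order of application is immaterial, and it suffices to prove that each removed comparator is inert on $\bar{b}$. In other words, it is enough to show
\[
b_i \;\geq\; b_{k/2+i} \qquad \text{for every } i \in \{1,\ldots,k/4\},
\]
because then applying the missing comparators to $\bar{b}$ leaves every coordinate fixed, and the subsequent action of the remaining comparators coincides with the full action of $split^k$.

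The key ingredient for this inequality is Lemma \ref{lma:transit}, which, under the hypothesis that $\bar{b}$ is v-shape s-dominating, gives
\[
\tuple{b_1,\ldots,b_{k/4}} \;\succeq\; \tuple{b_{k/2+1},\ldots,b_{3k/4}}.
\]
By the definition of $\succeq$, every entry of the left tuple is at least every entry of the right tuple; specializing to matched indices yields $b_i \geq b_{k/2+i}$ for all $i \in \{1,\ldots,k/4\}$, which is precisely what is needed.

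No genuine obstacle is anticipated: once the comparator-level bookkeeping (that the comparators removed from $split^k$ to form $half\_split^k$ are exactly those acting on the index pairs in the range $1 \leq i \leq k/4$) is made explicit, the lemma reduces immediately to an application of Lemma \ref{lma:transit}. The only small point requiring care is justifying that the removed and retained comparators touch disjoint wire sets, so that leaving out the inert ones does not alter the result of the remaining ones; this is immediate from the index description in Definition \ref{def:hs}.
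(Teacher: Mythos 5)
Your proof is correct and follows exactly the route the paper intends: the paper's own proof is the one-line citation ``Directly from Lemma \ref{lma:transit},'' and your write-up simply makes explicit the bookkeeping (the removed comparators are those on pairs $(i,k/2+i)$ for $i\le k/4$, Lemma \ref{lma:transit} shows each is inert, and disjointness of the wire pairs lets you drop them without affecting the rest). Nothing is missing.
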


  \begin{proof}
    Directly from Lemma \ref{lma:transit}.
  \end{proof}

  \begin{lemma}
    Let $\bar{b}$ be v-shape s-dominating.
    Following statements are true: (1) $left(half\_split^k(\bar{b}))$ is v-shape s-dominating;
    (2) $right(half\_split^k(\bar{b}))$ is bitonic; (3) $left(half\_split^k(\bar{b})) \succeq right(half\_split^k(\bar{b}))$.
  \label{lma:dom}
  \end{lemma}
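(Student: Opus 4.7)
The plan is to first invoke Lemma \ref{lma:hs_s} to replace $half\_split^k$ by the ordinary splitter, setting $\bar{c} = split^k(\bar{b})$, so that $c_j = \max(b_j, b_{k/2+j})$ and $c_{k/2+j} = \min(b_j, b_{k/2+j})$ for $1 \le j \le k/2$. Since every v-shape sequence is bitonic, Lemma \ref{lma:split_bit} applied to $\bar{b}$ immediately yields claims (2) and (3): $right(\bar{c})$ is bitonic and $left(\bar{c}) \succeq right(\bar{c})$. All of the real work is therefore in claim (1), which demands that $left(\bar{c})$ be \emph{both} v-shape and s-dominating.

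The s-dominating half of (1) is the easier part. Given $1 \le j \le k/4$, I will establish that $c_j \ge b_j \ge c_{k/2-j+1}$. The lower bound $c_j \ge b_j$ is immediate from the definition of $c_j$. For the upper bound on $c_{k/2-j+1} = \max(b_{k/2-j+1}, b_{k-j+1})$, I will use Lemma \ref{lma:sd} to locate the valley of $\bar{b}$ at an index strictly greater than $k/2$, so the prefix $b_1,\ldots,b_{k/2}$ is nonincreasing; combined with $j \le k/2-j+1 \le k/2$, this gives $b_j \ge b_{k/2-j+1}$. Meanwhile $b_j \ge b_{k-j+1}$ is exactly the s-dominating hypothesis on $\bar{b}$, since $j \le k/4 \le k/2$. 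Chaining the two inequalities proves $c_j \ge c_{k/2-j+1}$.

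The v-shape of $left(\bar{c})$ is the main obstacle, because Lemma \ref{lma:split_bit} only guarantees that $left(\bar{c})$ is bitonic, and there exist bitonic sequences that begin with a decrease yet are not v-shape (for instance $3,2,1,5,4$, which is a circular shift of the sorted sequence $1,5,4,3,2$). I will argue directly from the structure of $\bar{b}$. Letting $p$ denote the index of its valley, with $k/2 < p \le k$, I view both $b_j$ and $b_{k/2+j}$ as functions of $j \in [1, k/2]$: the first is nonincreasing throughout, while the second is nonincreasing on $[1, p-k/2]$ and nondecreasing on $[p-k/2, k/2]$. On $[1, p-k/2]$ the pointwise maximum of two nonincreasing functions is itself nonincreasing; on $[p-k/2, k/2]$ the maximum of a nonincreasing and a nondecreasing function first tracks the decreasing one and then, past a unique crossing index $q$, tracks the increasing one. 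Splicing these two regimes shows that $c_1,\ldots,c_{k/2}$ is nonincreasing on $[1, q]$ and nondecreasing on $[q, k/2]$, i.e.\ v-shape, which finishes (1).
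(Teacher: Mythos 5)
Your proposal is correct and follows essentially the same route as the paper: reduce $half\_split^k$ to $split^k$ via Lemma \ref{lma:hs_s}, obtain (2) and (3) from Lemma \ref{lma:split_bit}, prove s-domination of the left half from $c_j \geq b_j \geq \max\{b_{k/2-j+1},b_{k-j+1}\}$, and prove the v-shape by observing that the pointwise maximum tracks the nonincreasing $b_j$ until a crossing point and the nondecreasing $b_{j+k/2}$ afterwards. The only cosmetic difference is that you organize the v-shape argument forward from the valley of $\bar{b}$, whereas the paper starts from the first ascent of the output sequence; both hinge on the same crossing observation.
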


  \begin{proof}
    (1) Let $\bar{y}=left(half\_split^k(\bar{b}))$. First we show that $\bar{y}$ is v-shaped.
    If $\bar{y}$ is nonincreasing, then it is v-shaped. Otherwise, let $j$ be the first index
    from the range $\{1,\ldots,k/2\}$, where $y_{j-1}<y_j$. Since $y_j=\max\{b_j,b_{j+k/2}\}$ and
    $y_{j-1} \geq b_{j-1} \geq b_j$, thus $b_j < b_{j+k/2}$.
    Since $\bar{b}$ is v-shaped, element $b_{j+k/2}$ must be in nondecreasing part of $\bar{b}$.
    It follows that $b_{j} \geq \ldots \geq b_{k/2}$ and $b_{j+k/2} \leq \ldots \leq b_k$.
    From this we can see that $\forall_{j\leq j' \leq k/2}$ $y_{j'}=\max\{b_{j'},b_{j'+k/2}\} = b_{j'+k/2}$,
    so $y_j \leq \ldots \leq y_{k/2}$. Therefore $\bar{y}$ is v-shaped.

    Next we show that $\bar{y}$ is s-dominating. Consider any $j$, where $1 \leq j \leq k/4$.
    By Definition \ref{def:bit} and \ref{def:sd}: $b_j \geq b_{k/2-j+1}$ and $b_j \geq b_{k-j+1}$,
    therefore $y_j = b_j \geq \max\{b_{k/2-j+1},b_{k-j+1}\} = y_{k/2-j+1}$, thus proving that $\bar{y}$ is s-dominating.
    Concluding: $\bar{y}$ is v-shape s-dominating.
    
    (2) Let $\bar{z}=right(half\_split^k(\bar{b}))$. By Lemma \ref{lma:hs_s}: $\bar{z}=right(split^k(\bar{b}))$. We know that $\bar{b}$
    is a special case of bitonic sequence, therefore using Lemma \ref{lma:split_bit} we get that $\bar{z}$ is bitonic.

    (3) Let $\bar{w}=half\_split^k(\bar{b})$. By Lemma \ref{lma:hs_s}: $\bar{w}=split^k(\bar{b})$. We know that $\bar{b}$
    is a special case of bitonic sequence, therefore using Lemma \ref{lma:split_bit} we get $left(\bar{w}) \succeq right(\bar{w})$.
  \end{proof}

  Using $half\_split$ and Batcher's $bit\_merge$ and successively
  applying Lemma \ref{lma:dom} to the resulting v-shape
  s-dominating half of the output, we have all the tools needed to construct the improved pairwise merger using half splitters:

  \begin{net}[$pw\_hbit\_merge^n_k$]
    Input: $\bar{l} :: \bar{r}$, where $\bar{l} \in \nat^{n/2}$ is top $k$ sorted and $\bar{r} \in \nat^{n/2}$ is top $k/2$ sorted
    and $\tuple{l_1,\ldots,l_{k/2}}$ dominates $\tuple{r_1,\ldots,r_{k/2}}$.

  \begin{enumerate}
    \item  Compute $\bar{y}=bit\_split^k(l_{k/2+1},\ldots,l_k,r_1,\ldots,r_{k/2})$,
        let $\bar{b}=\tuple{l_1,\ldots,l_{k/2}} :: \tuple{y_1,\ldots,y_{k/2}}$.
    \item Compute $half\_bit\_merge^k(\bar{b})$:
    \begin{enumerate}
      \item If $k=2$, return.
      \item Let $\bar{b'}=half\_split(b_1,\ldots,b_k)$.
      \item Recursively compute $\bar{l'} = half\_bit\_merge^{k/2}(left(\bar{b'}))$.
      \item Compute $\bar{r'} = bit\_merge^{k/2}(right(\bar{b'}))$.
      \item Return $\bar{l'}::\bar{r'}$.
    \end{enumerate}
  \end{enumerate}
  \label{net:pw_merge2}
  \end{net}

  The following theorem states that the construction of $pw\_hbit\_merge^n_k$ is correct.
  
  \begin{theorem}
    The output of Network \ref{net:pw_merge2} consists of sorted $k$ largest elements from input $\bar{l} :: \bar{r}$, assuming
    that $\bar{l} \in \nat^{n/2}$ is top $k$ sorted and $\bar{r} \in \nat^{n/2}$
    is top $k/2$ sorted and $\tuple{l_1,\ldots,l_{k/2}}$ dominates $\tuple{r_1,\ldots,r_{k/2}}$. Also $|pw\_hbit\_merge^n_k|=k \log k/2$. 
    \label{thm:pw_bit_merge}
  \end{theorem}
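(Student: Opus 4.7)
The plan is to reuse the analysis of Theorem \ref{thm:pw_merge} and then strengthen it: step 1 of Network \ref{net:pw_merge2} is identical to step 1 of Network \ref{net:pw_merge}, so the sequence $\bar{b}$ produced after step~1 is v-shaped and already consists of the $k$ largest elements of $\bar{l}::\bar{r}$. The new ingredient needed is that $\bar{b}$ is in fact \emph{v-shape s-dominating}; once this is in hand, the iterated $half\_split$/$bit\_merge$ structure of step~2 can be verified by a short induction using Lemma \ref{lma:dom}, and the comparator count follows by solving a one-line recurrence.

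To establish s-domination I would simply spell out the indices. In the concatenation $\bar{b} = \tuple{l_1,\ldots,l_{k/2}} :: \tuple{y_1,\ldots,y_{k/2}}$ we have $b_j = l_j$ for $j \leq k/2$, while $b_{k-j+1} = y_{k/2-j+1}$, which by the definition of the bitonic splitter equals $\max\{l_{k-j+1}, r_j\}$. So for $1 \leq j \leq k/2$ s-domination reduces to $l_j \geq l_{k-j+1}$ (immediate, since $\bar{l}$ is top-$k$ sorted and $j \leq k-j+1$) and $l_j \geq r_j$ (which is exactly the domination hypothesis on $\tuple{l_1,\ldots,l_{k/2}}$ and $\tuple{r_1,\ldots,r_{k/2}}$).

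Next I would prove by induction on $k$ that $half\_bit\_merge^k$ sorts any v-shape s-dominating input. The base case $k=2$ is immediate because v-shape s-dominating at length $2$ just says $b_1 \geq b_2$. For the inductive step, Lemma \ref{lma:dom} tells us that after $half\_split^k$ the left half is again v-shape s-dominating, the right half is bitonic, and the left dominates the right. Hence the recursive call $half\_bit\_merge^{k/2}$ sorts the left half by the inductive hypothesis, $bit\_merge^{k/2}$ sorts the (bitonic) right half, and concatenating the two sorted halves yields a sorted sequence because left dominates right. Combined with the output of step~1, this gives the sorted list of $k$ largest elements of $\bar{l}::\bar{r}$.

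Finally, for the size, let $T(k) = |half\_bit\_merge^k|$. The recursion gives $T(2)=0$ and $T(k) = k/4 + T(k/2) + (k/2)\log(k/2)/2$, using $|half\_split^k| = k/4$ and Batcher's $|bit\_merge^{k/2}| = (k/2)\log(k/2)/2$. This solves to $T(k) = k\log k/2 - k/2$, and adding $|bit\_split^k| = k/2$ from step~1 gives $|pw\_hbit\_merge^n_k| = k \log k / 2$. The main obstacle in the whole argument is really just verifying the s-domination property of $\bar{b}$; everything else is a clean consequence of Lemma \ref{lma:dom} and the standard sizes of Batcher's networks.
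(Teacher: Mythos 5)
Your proposal is correct and follows essentially the same route as the paper: reuse Theorem \ref{thm:pw_merge} for v-shapedness and the top-$k$ property, verify s-domination of $\bar{b}$ by the index computation $b_{k-j+1}=\max\{l_{k-j+1},r_j\}\leq l_j=b_j$, and then induct on $k$ via Lemma \ref{lma:dom}. The only (harmless) difference is in the size bound, where you solve the recurrence $T(k)=k/4+T(k/2)+(k/2)\log(k/2)/2$ directly rather than counting the comparators removed from Batcher's $bit\_merge^k$ as the paper does; both yield $k\log k/2$.
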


  \begin{proof}
    Since step 1 in Network \ref{net:pw_merge2} is the same as in Network \ref{net:pw_merge}, we can reuse the proof
    of Theorem \ref{thm:pw_merge} to deduce, that $\bar{b}$ is v-shaped and is containing $k$ largest elements from $\bar{l} :: \bar{r}$.
    Also, since $\forall_{1\leq j \leq k/2}$ $l_j \geq l_{k-j+1}$ and $l_j \geq r_j$, then $b_j = l_j \geq \max\{l_{k-j+1},r_j\} = b_{k-j+1}$,
    so $\bar{b}$ is s-dominating.
    
    We prove by the induction on $k$, that if $\bar{b}$ is v-shape s-dominating, then the sequence $half\_bit\_merge^k(\bar{b})$ is sorted.
    For the base case, consider $k=2$ and a v-shape s-dominating sequence $\tuple{b_1,b_2}$. By Definition \ref{def:sd} this sequence
    is already sorted and we are done. For the induction step, consider $\bar{b'} = half\_split^k(\bar{b})$.
    By Lemma \ref{lma:dom} we get that $left(\bar{b'})$ is v-shape s-dominating and $right(\bar{b'})$ is bitonic. Using the induction
    hypothesis we sort $left(\bar{b'})$ and using bitonic merger we sort $right(\bar{b'})$. By Lemma \ref{lma:dom}:
    $left(\bar{b'}) \succeq right(\bar{b'})$, which completes the proof of correctness. 

    As mentioned in Definition \ref{def:hs}: $half\_split^k$ is just $split^k$ with the first $k/4$ comparators removed.
    So $half\_bit\_merge^k$ is
    just $bit\_merge^k$ with some of the comparators removed. Let's count them: in each level
    of recursion step we take half of comparators from $split^k$ and additional one comparator from the base case ($k=2$).
    We sum them together to get:

    \begin{equation*}
      1 + \sum_{i=0}^{\log k - 2}\frac{k}{2^{i+2}} = 1 + \frac{k}{4}\left(\sum_{i=0}^{\log k - 1}\left(\frac{1}{2}\right)^i - \frac{2}{k}\right) = 1 + \frac{k}{4}\left(2 - \frac{2}{k} - \frac{2}{k} \right) = \frac{k}{2}
    \end{equation*}

   \noindent Therefore we have:
 
    \[
      |pw\_hbit\_merge^n_k| = k/2 + k \log k/2 - k/2 = k \log k/2
    \]

  \end{proof}
  
  The only difference between $pw\_sel$ and our $pw\_hbit\_sel$ is the use of improved merger $pw\_hbit\_merge$ rather than $pw\_merge$.
  By Theorem \ref{thm:pw_bit_merge}, we conclude that $|pw\_merge^n_k| \geq |pw\_hbit\_merge^n_k|$,
  so it follows that:

  \begin{remark}
    $|pw\_hbit\_sel^n_k| \leq |pw\_sel^n_k|$
  \end{remark}

  \section{Sizes of new selection networks}

  In this section we estimate the size of $pw\_hbit\_sel^n_k$. To this end we show that the size of $pw\_hbit\_sel^n_k$
  is upper-bounded by the size of $bit\_sel^n_k$ and use this fact in our estimation. We also compute the exact difference between
  sizes of $pw\_sel^n_k$ and $pw\_hbit\_sel^n_k$ and show that it can be as big as $n\log n/2$. Finally we show graphically how
  much smaller is our selection network on practical values of $n$ and $k$.
  
  We have the recursive formula for the number of comparators of $pw\_hbit\_sel^n_k$:

  \begin{equation}
    |pw\_hbit\_sel^n_k| = \left\{ 
    \begin{array}{l l}
      |pw\_hbit\_sel^{n/2}_k| + |pw\_hbit\_sel^{n/2}_{k/2}|+ &  \\ 
      + |split^n| + |pw\_hbit\_merge^k| & \quad \text{if $k<n$}\\
      |oe\_sort^k| & \quad \text{if $k=n$} \\
      |max^n| & \quad \text{if $k=1$} \\
    \end{array} \right.
  \label{eq:pw}
  \end{equation}

  \begin{lemma}
    $|pw\_hbit\_sel^n_k| \leq |bit\_sel^n_k|$.
    \label{lma:xyz}
  \end{lemma}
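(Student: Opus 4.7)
The plan is to prove this bound by induction on $n$, comparing the recurrence~(\ref{eq:pw}) for $|pw\_hbit\_sel^n_k|$ to the closed form~(\ref{eq:bit}) for $|bit\_sel^n_k|$.

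I would first dispatch the base cases, in which the two quantities are in fact equal. When $k=1$, $|pw\_hbit\_sel^n_1|=|max^n|=n-1$, and substituting $k=1$ into~(\ref{eq:bit}) gives $2n-1-n=n-1$ as well. When $k=n$, both networks reduce to $oe\_sort^n$: in Network~\ref{net:bit} the procedure starts with a single block ($l=n/k=1$) and skips the merging loop entirely.

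For the inductive step, suppose $1<k<n$, so $n\geq 2k$ since both are powers of two. Combining~(\ref{eq:pw}) with $|pw\_hbit\_merge^k|=k\log k/2$ from Theorem~\ref{thm:pw_bit_merge} and applying the inductive hypothesis to both subcalls $(n/2,k)$ and $(n/2,k/2)$, it will suffice to verify the purely algebraic inequality
\[
|bit\_sel^{n/2}_k|+|bit\_sel^{n/2}_{k/2}|+\frac{n}{2}+\frac{k\log k}{2}\;\leq\;|bit\_sel^n_k|.
\]
Substituting~(\ref{eq:bit}) on each side and collecting terms (the $n\log^2 k$ contributions cancel, as do the isolated $n$ and several fractional pieces) reduces this, after multiplying through by $4k$, to
\[
n\bigl(k\log k-2k+2\bigr)\;\geq\;k^2\bigl(\log k-1\bigr).
\]

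The main obstacle will be establishing this last inequality for all admissible pairs $(n,k)$ with $k\geq 2$ a power of two and $n\geq 2k$. The bracket $k\log k-2k+2$ is non-negative for every such $k$ and vanishes only at $k=2$, so the left-hand side is monotone nondecreasing in $n$, and it suffices to check the boundary case $n=2k$. There, after dividing by $k$, the inequality simplifies to $\log k\geq 3-4/k$, which holds with equality at $k\in\{2,4\}$ and follows from $\log k\geq 3$ for $k\geq 8$. The case $k=2$ is trivial since both sides of the reduced inequality then vanish identically, closing the induction.
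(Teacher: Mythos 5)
Your proof is correct and follows essentially the same route as the paper: the paper introduces an auxiliary network $aux\_sel^n_k$ whose size is exactly your intermediate expression $|bit\_sel^{n/2}_k|+|bit\_sel^{n/2}_{k/2}|+n/2+k\log k/2$, proves your algebraic inequality as Lemma~\ref{lma:main1} (reducing it to the same $n-\frac{1}{2}k-\frac{n}{k}\leq\frac{1}{2}(n-k)\log k$), and runs your induction as Lemma~\ref{lma:main2}. The only differences are that you fuse the two lemmas into a single induction on $n$ and that you actually verify the final inequality (via monotonicity in $n$ and the boundary case $n=2k$), whereas the paper dismisses it as ``easily proved by induction.''
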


  \begin{proof}
    Let $aux\_sel^n_k$ be the comparator network that is generated by substituting recursive calls in $pw\_hbit\_sel^n_k$
    by calls to $bit\_sel^n_k$. Size of this network (for $1<k<n$) is:

    \begin{equation}
      |aux\_sel^n_k| = |bit\_sel^{n/2}_k| + |bit\_sel^{n/2}_{k/2}| + |split^n| + |pw\_hbit\_merge^k|
      \label{eq:aux}
    \end{equation}

    \noindent Lemma \ref{lma:xyz} follows from Lemma \ref{lma:main1} and Lemma \ref{lma:main2} below, where we show that:

    \[
      |pw\_hbit\_sel^n_k| \leq |aux\_sel^n_k| \leq |bit\_sel^n_k|
    \]
  \end{proof}

  \begin{lemma}
    For $1 < k < n$ (both powers of 2), $|aux\_sel^n_k| \leq |bit\_sel^n_k|$.
    \label{lma:main1}
  \end{lemma}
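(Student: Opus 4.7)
My plan is a direct closed-form computation followed by case analysis. Since $|bit\_sel^n_k|$ is given by equation (\ref{eq:bit}) and the four summands of $|aux\_sel^n_k|$ in (\ref{eq:aux}) each have closed forms ($|bit\_sel^{n/2}_k|$ via (\ref{eq:bit}), similarly $|bit\_sel^{n/2}_{k/2}|$, $|split^n|=n/2$, $|pw\_hbit\_merge^k|=k\log k/2$ by Theorem \ref{thm:pw_bit_merge}), I would simply plug everything in, simplify, and compare.

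First, I would substitute $n \mapsto n/2$ in (\ref{eq:bit}) to get $|bit\_sel^{n/2}_k|$, then substitute $n \mapsto n/2$ and $k \mapsto k/2$ (expanding $\log(k/2)=\log k -1$ and $(\log k -1)^2=\log^2 k - 2\log k + 1$) to get $|bit\_sel^{n/2}_{k/2}|$. Adding the four terms of (\ref{eq:aux}) and collecting, the cross-terms of the form $n\log k$ cancel cleanly and I expect to arrive at
\[
|aux\_sel^n_k| \;=\; \frac{n\log^2 k}{4} + \frac{5n}{2} - \frac{k\log k}{4} - \frac{5k}{4} - \frac{3n}{2k}.
\]

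Second, subtracting from (\ref{eq:bit}) the $\log^2 k$ and $n$ terms largely cancel, leaving
\[
|bit\_sel^n_k| - |aux\_sel^n_k| \;=\; \frac{(n-k)\log k}{4} - \frac{n}{2} + \frac{k}{4} + \frac{n}{2k},
\]
which I would rewrite in the more convenient form
\[
\frac{(n-k)(\log k - 1)}{4} - \frac{n(k-2)}{4k}.
\]

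Third, I would show this is nonnegative for all powers of $2$ with $1<k<n$ (so in particular $k\le n/2$). For $k=2$ both terms vanish; for $k=4$ the claim reduces to $n-4 \ge n/2$, which holds since $n\ge 8$; for $k\ge 8$, using $n-k\ge n/2$, the first term is at least $n(\log k-1)/8 \ge n/4$ while the second is strictly less than $n/4$, so the difference is positive.

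The main obstacle, if any, is just keeping the arithmetic in the first step honest; the algebra has several cancellations and one needs to be careful expanding $\log(k/2)$ and $(\log k-1)^2$ correctly. Once the closed form of the difference is in hand, the nonnegativity is routine and the edge cases $k\in\{2,4\}$ (where equality or near-equality occurs) are checked directly.
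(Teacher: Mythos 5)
Your proposal is correct and follows essentially the same route as the paper: both substitute the closed forms into equation (\ref{eq:aux}), arrive at exactly the same expression $|aux\_sel^n_k|= \frac{1}{4}n\log^2k + \frac{5}{2}n - \frac{1}{4}k\log k - \frac{5}{4}k - \frac{3n}{2k}$, and reduce the claim to the same inequality (your form $\frac{(n-k)(\log k-1)}{4} \geq \frac{n(k-2)}{4k}$ is an algebraic rearrangement of the paper's $n - \frac{1}{2}k - \frac{n}{k} \leq \frac{1}{2}(n-k)\log k$). The only difference is the final step, where the paper merely asserts the inequality "can be easily proved by induction" while you verify it directly via the cases $k=2$, $k=4$, and $k\geq 8$ using $k \leq n/2$ --- which is if anything more explicit and complete than what the paper provides.
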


  \begin{proof}
    We compute both values from equations \ref{eq:bit} and \ref{eq:aux}:

    \begin{align*}
      |aux\_sel^n_k|&= \frac{1}{4}n\log^2k + \frac{5}{2}n - \frac{1}{4}k\log k - \frac{5}{4}k - \frac{3n}{2k} \\
      |bit\_sel^n_k|&= \frac{1}{4}n\log^2k + \frac{1}{4}n\log k + 2n - \frac{1}{2}k\log k - k - \frac{n}{k}
    \end{align*}

    \noindent We simplify both sides to get the following inequality:

    \[
      n - \frac{1}{2}k - \frac{n}{k} \leq \frac{1}{2}(n-k)\log k
    \]

    \noindent which can be easily proved by induction.
  \end{proof}

  \begin{lemma}
    For $1 \leq k < n$ (both powers of 2), $|pw\_hbit\_sel^n_k| \leq |aux\_sel^n_k|$.
    \label{lma:main2}
  \end{lemma}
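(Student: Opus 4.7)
My plan is to prove Lemma \ref{lma:main2} by induction on $n$, coupled with Lemma \ref{lma:xyz}: when treating parameter $n$ I may assume that Lemma \ref{lma:xyz} holds for all strictly smaller instances. This is legitimate because Lemma \ref{lma:xyz} at $n$ is obtained by combining Lemma \ref{lma:main2} at $n$ with Lemma \ref{lma:main1} (already established as a pure algebraic inequality). The induction hypothesis therefore supplies $|pw\_hbit\_sel^{n'}_{k'}| \leq |bit\_sel^{n'}_{k'}|$ for every $n' < n$ and every admissible $k'$.

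For the inductive step with $1 < k < n$, equations \ref{eq:pw} and \ref{eq:aux} yield
\[
|pw\_hbit\_sel^n_k| = |pw\_hbit\_sel^{n/2}_k| + |pw\_hbit\_sel^{n/2}_{k/2}| + |split^n| + |pw\_hbit\_merge^k|,
\]
\[
|aux\_sel^n_k| = |bit\_sel^{n/2}_k| + |bit\_sel^{n/2}_{k/2}| + |split^n| + |pw\_hbit\_merge^k|,
\]
so the common summands $|split^n|$ and $|pw\_hbit\_merge^k|$ cancel and the required inequality reduces to
\[
|pw\_hbit\_sel^{n/2}_k| + |pw\_hbit\_sel^{n/2}_{k/2}| \leq |bit\_sel^{n/2}_k| + |bit\_sel^{n/2}_{k/2}|,
\]
which holds summand by summand by the induction hypothesis. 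No arithmetic at level $n$ is needed at all; all the substantive algebra has already been absorbed into Lemma \ref{lma:main1}.

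For the outer base case $k = 1$, both $|pw\_hbit\_sel^n_1|$ and $|aux\_sel^n_1|$ collapse to $|max^n|$, so the inequality holds trivially. The only real nuisance --- and what I regard as the main obstacle --- is making sure the induction hypothesis really covers the boundary parameter values reached inside the recursion: when $k = n/2$ the left subproblem lands on $pw\_hbit\_sel^{n/2}_{n/2} = oe\_sort^{n/2}$, and when $k = 2$ the right subproblem lands on $pw\_hbit\_sel^{n/2}_1 = max^{n/2}$. Using the closed form from equation \ref{eq:bit}, one checks directly that $|bit\_sel^{n/2}_{n/2}|$ equals $|oe\_sort^{n/2}|$ and that $|bit\_sel^{n/2}_1| = n/2 - 1 = |max^{n/2}|$, so the summand bound still applies (with equality) at these edges. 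Once those two identities are in hand the induction closes, and Lemma \ref{lma:main2} follows.
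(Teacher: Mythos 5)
Your proof is correct and follows essentially the same route as the paper: both unroll one level of the recursion, cancel the common $|split^n|$ and $|pw\_hbit\_merge^k|$ terms, and bound the two recursive subcalls by the corresponding $bit\_sel$ sizes --- your induction hypothesis (Lemma \ref{lma:xyz} at $n/2$) is exactly the paper's chain of ``induction hypothesis for Lemma \ref{lma:main2}, then Lemma \ref{lma:main1}''. Your explicit check of the boundary subproblems $k'=n'$ and $k'=1$, where Lemma \ref{lma:main1} does not apply and one instead has the equalities $|bit\_sel^{n/2}_{n/2}|=|oe\_sort^{n/2}|$ and $|bit\_sel^{n/2}_1|=|max^{n/2}|$, is a detail the paper's proof leaves implicit.
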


  \begin{proof}
    By induction. For the base case, consider $1=k<n$. If follows by definitions that $|pw\_hbit\_sel^n_k|=|aux\_sel^n_k|=n-1$.
    For the induction step assume that for each $(n',k') \prec (n,k)$ (in lexicographical order) the lemma holds, we get:

    \[
    \begin{tabular}{l r}
      \multicolumn{2}{l}{$|pw\_hbit\_sel^n_k|$} \\
      \multicolumn{2}{l}{$=|pw\_hbit\_sel^{n/2}_{k/2}| + |pw\_hbit\_sel^{n/2}_k|+|split^n|+|pw\_hbit\_merge^k|$} \\
      & \small{\bf (by the definition of $pw\_hbit\_sel$)} \\
      \multicolumn{2}{l}{$\leq|aux\_sel^{n/2}_{k/2}| + |aux\_sel^{n/2}_k|+|split^n|+|pw\_hbit\_merge^k|$} \\
      & \small{\bf (by the induction hypothesis)} \\
      \multicolumn{2}{l}{$\leq|bit\_sel^{n/2}_{k/2}| + |bit\_sel^{n/2}_k|+|split^n|+|pw\_hbit\_merge^k|$} \\
      & \small{\bf (by Lemma \ref{lma:main1})} \\
      \multicolumn{2}{l}{$= |aux\_sel^n_k|$} \\
      & \small{\bf (by the definition of $aux\_sel$)} \\
    \end{tabular}
    \]
  \end{proof}

  Let $N=2^n$ and $K=2^k$. We will compute upper bound for $P(n,k)=|pw\_hbit\_sel^N_K|$ using $B(n,k)=|bit\_sel^N_K|$.
 
  \begin{lemma}
    Let:
    
    \[
      P(n,k,m) = \sum_{i=0}^{m-1}\sum_{j=0}^i\binom{i}{j}\left((k-j)2^{k-j-1}+2^{n-i-1}\right) + \sum_{i=0}^{m}\binom{m}{i}P(n-m,k-i).
    \]

    \noindent Then $\forall_{0 \leq m \leq \min(k,n-k)}$ $P(n,k,m)=P(n,k)$.
    \label{lma:unfold}
  \end{lemma}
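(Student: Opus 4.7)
The plan is to prove the lemma by induction on $m$. The starting observation is that, after substituting $N = 2^n$ and $K = 2^k$ into equation~(\ref{eq:pw}), the function $P$ satisfies the recurrence
\[
P(n,k) = P(n-1,k) + P(n-1,k-1) + 2^{n-1} + k\cdot 2^{k-1}, \qquad 1 \leq k \leq n-1,
\]
and the expression $P(n,k,m)$ is precisely the result of unfolding this recurrence $m$ times: the inner double sum collects the local costs $2^{n-i-1} + (k-j)2^{k-j-1}$ accumulated at each unfolding level, while the outer sum collects the $P$-values reached at depth $m$, weighted by the binomials that count the number of recursion paths ending at $(n-m, k-i)$.

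The base case $m = 0$ is immediate: the double sum is empty and the outer sum reduces to $\binom{0}{0}P(n,k) = P(n,k)$. For the inductive step, I would assume $P(n,k,m) = P(n,k)$ for some $m$ with $m+1 \leq \min(k, n-k)$ and show $P(n,k,m+1) = P(n,k,m)$. The difference between the two sides splits into two contributions: (i) the new ``row'' $\sum_{j=0}^m \binom{m}{j}\bigl((k-j)2^{k-j-1} + 2^{n-m-1}\bigr)$ that the double sum acquires at $i = m$, and (ii) the change $\sum_{i=0}^{m+1}\binom{m+1}{i}P(n-m-1,k-i) - \sum_{i=0}^{m}\binom{m}{i}P(n-m,k-i)$ in the outer sum.

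The hypothesis $m+1 \leq \min(k, n-k)$ is exactly what guarantees $1 \leq k-i \leq n-m-1$ for every $0 \leq i \leq m$, so the recurrence may be applied to each $P(n-m, k-i)$. After substituting and splitting the result into ``$P$-parts'' and ``cost-parts'', I would reindex $i \mapsto i+1$ in the sum produced by the $P(n-m-1, k-i-1)$ pieces and invoke Pascal's identity $\binom{m}{i} + \binom{m}{i-1} = \binom{m+1}{i}$. The $P$-parts then assemble into exactly $\sum_{i=0}^{m+1}\binom{m+1}{i}P(n-m-1, k-i)$, which cancels against (ii), and the cost-parts sum to $\sum_{i=0}^{m}\binom{m}{i}\bigl(2^{n-m-1} + (k-i)2^{k-i-1}\bigr)$, which cancels term-for-term against (i) after renaming $i \leftrightarrow j$. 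Hence $P(n,k,m+1) - P(n,k,m) = 0$, completing the induction.

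The main obstacle is essentially bookkeeping rather than anything deep: one has to keep careful track of the index ranges and verify that the assumption $m \leq \min(k, n-k)$ is precisely the condition that prevents either boundary case of the recurrence (corresponding to $k=n$ or $k=1$ in the original notation) from being triggered during the unfolding. Once that is in place, the identity reduces to a routine application of Pascal's rule.
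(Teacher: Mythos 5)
Your proof is correct and takes the same route the paper intends: the paper's own argument is only the one-line remark that the lemma ``can be easily proved by induction on $m$''. Your write-up supplies exactly the missing details --- unfolding the recurrence $P(n,k)=P(n-1,k)+P(n-1,k-1)+2^{n-1}+k2^{k-1}$ one more level, recombining the binomial coefficients via Pascal's rule, and correctly identifying $m\leq\min(k,n-k)$ as the condition keeping every subproblem in the recursive case.
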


  \begin{proof}
    The lemma can be easily proved by induction on $m$.
  \end{proof}

  \begin{lemma}
    $P(n,k,m) \leq 2^{n-2}\left( \left(k-\frac{m}{2}\right)^2 + k + \frac{7m}{4} + 8 \right) + 2^k\left(\frac{3}{2}\right)^m\left(\frac{k}{2} - \frac{m}{6}\right) - 2^k(k+1) - 2^{n-k}\left(\frac{3}{2}\right)^m$.
    \label{lma:bigineq}
  \end{lemma}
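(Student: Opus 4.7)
The plan is to bound $P(n,k,m)$ by applying Lemma~\ref{lma:xyz} to each $P(n-m,k-i)$ occurring in the second sum of the formula from Lemma~\ref{lma:unfold}, then plugging in the closed form of $|bit\_sel|$ from equation~\eqref{eq:bit} and evaluating the resulting binomial sums. The first double sum in the definition of $P(n,k,m)$ is already explicit, so no induction on $m$ (or on $(n,k)$) is needed beyond what Lemma~\ref{lma:xyz} already provides.

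First, I would substitute
\[
P(n-m,k-i)\;\le\;|bit\_sel^{2^{n-m}}_{2^{k-i}}|\;=\;\tfrac{1}{4}2^{n-m}(k-i)^2+\tfrac{1}{4}2^{n-m}(k-i)+2\cdot 2^{n-m}-\tfrac{1}{2}2^{k-i}(k-i)-2^{k-i}-2^{n-m-k+i}
\]
into $P(n,k,m)$. This breaks the second sum $\sum_{i=0}^{m}\binom{m}{i}P(n-m,k-i)$ into six pieces, each of the form $\sum_{i=0}^{m}\binom{m}{i}f(i)$ with $f(i)$ a simple polynomial or exponential in $i$. I would evaluate these via the standard identities $\sum_{i}\binom{m}{i}=2^m$, $\sum_{i}\binom{m}{i}i=m2^{m-1}$, $\sum_{i}\binom{m}{i}i^2=m(m+1)2^{m-2}$, and the ``half'' versions obtained from $(1+x)^m=\sum\binom{m}{i}x^i$ at $x=1/2$: namely $\sum\binom{m}{i}2^{-i}=(3/2)^m$ and $\sum\binom{m}{i}i\,2^{-i}=\frac{m}{3}(3/2)^m$. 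The $(3/2)^m$ factor in the target bound will come from this last group (the $-2^{k-i}(k-i)/2-2^{k-i}$ pieces and the $-2^{n-m-k+i}$ piece).

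Next, I would collapse the first double sum in the definition of $P(n,k,m)$. Splitting it into two parts, the $2^{n-i-1}$ part gives
\[
\sum_{i=0}^{m-1}\sum_{j=0}^{i}\binom{i}{j}2^{n-i-1}=\sum_{i=0}^{m-1}2^{n-1}=m\,2^{n-1},
\]
while the $(k-j)2^{k-j-1}$ part, by swapping the order of summation, becomes $\sum_{j=0}^{m-1}(k-j)2^{k-j-1}\sum_{i=j}^{m-1}\binom{i}{j}$; applying the hockey stick identity $\sum_{i=j}^{m-1}\binom{i}{j}=\binom{m}{j+1}$ and then simplifying the resulting sum gives a closed form that contributes the $-2^k(k+1)$ and parts of the $2^k(3/2)^m$ and $2^{n-2}(k-m/2)^2$ terms appearing on the right-hand side.

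Finally I would collect the pieces by powers of two. The $2^{n-m}$-coefficient contributions from the second sum combine with the $m\,2^{n-1}$ from the first sum and can be rewritten in the form $2^{n-2}\bigl((k-m/2)^2+k+\tfrac{7m}{4}+8\bigr)$ (plus error terms that are negative and can be dropped), while the $2^k$-contributions combine into $2^k(3/2)^m(k/2-m/6)-2^k(k+1)$ and the $2^{n-k}$-contribution yields $-2^{n-k}(3/2)^m$. The main obstacle is purely bookkeeping: matching the algebraic output of the binomial identities against the exact form of the stated bound, and verifying that the neglected lower-order terms have the correct sign so that the bound is indeed an upper bound rather than an equality.
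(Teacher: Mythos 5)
Your plan matches the paper's proof: bound each $P(n-m,k-i)$ by $B(n-m,k-i)=|bit\_sel^{2^{n-m}}_{2^{k-i}}|$ via Lemma~\ref{lma:xyz}, substitute the closed form~\eqref{eq:bit}, and evaluate the resulting binomial sums with exactly the identities you list (your hockey-stick detour for the $(k-j)2^{k-j-1}$ double sum yields the same closed form $2^k(3/2)^m(k+1-\tfrac{m}{3})-2^k(k+1)$ that the paper obtains by summing over $j$ first). Two small corrections to your bookkeeping: that double sum contributes nothing to the $2^{n-2}(\cdot)$ coefficient, and no negative error terms need to be dropped at the end --- after the single application of Lemma~\ref{lma:xyz} the remaining computation closes as an exact equality with the stated bound.
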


\begin{proof}
  First inequality below is a consequence of Lemma \ref{lma:unfold} and \ref{lma:xyz}. We also use the following equations:
  $\sum_{k=0}^n\binom{n}{k}x^{k-1}k = n (1 + x)^{n-1}$, 
  $\sum_{k=0}^n\binom{n}{k}k^2 =  n(n+1)2^{n-2}$,
  $\sum_{k=0}^{n-1}x^{k-1}k = \frac{(1-x)(-nx^{n-1})+(1-x^n)}{(1-x)^2}$.
  
    \begin{align*}
      %P(n,k,m) &= \sum_{i=0}^{m-1}\sum_{j=0}^i\binom{i}{j}\left((k-j)2^{k-j-1}+2^{n-i-1}\right) + \sum_{i=0}^{m}\binom{m}{i}P(n-m,k-i) \\
      P(n,k,m) &\leq \underbrace{\sum_{i=0}^{m-1}\sum_{j=0}^i\binom{i}{j}\left((k-j)2^{k-j-1}+2^{n-i-1}\right)}_{(\ref{eq:1})} + \underbrace{\sum_{i=0}^{m}\binom{m}{i}B(n-m,k-i)}_{(\ref{eq:2})}\\
         &= 
         \left(2^{k}\left(\frac{3}{2}\right)^{m}\left(k+1-\frac{m}{3}\right)-2^k(k+1)+m2^{n-1}\right)
          \\
         &\quad+ 2^{n-2}\left( k^2 - km + \frac{m(m-1)}{4} + k + 8 \right) \\
         &\quad+ 2^k\left(\frac{3}{2}\right)^m\left(-\frac{k}{2} + \frac{m}{6} - 1\right) 
         - 
         2^{n-k}\left(\frac{3}{2}\right)^m\\
 &= 2^{n-2}\left( \left(k-\frac{m}{2}\right)^2 + k + \frac{7m}{4} + 8 \right) + 2^k\left(\frac{3}{2}\right)^m\left(\frac{k}{2} - \frac{m}{6}\right) \\
  &\quad- 2^k(k+1) - 2^{n-k}\left(\frac{3}{2}\right)^m
    \end{align*}

    \begin{align}
      \sum_{i=0}^{m-1}\sum_{j=0}^i&\binom{i}{j}\left((k-j)2^{k-j-1}+2^{n-i-1}\right) \label{eq:1} \\
  &= \underbrace{\sum_{i=0}^{m-1}\sum_{j=0}^{i}\binom{i}{j}(k-j)2^{k-j-1}}_{(\ref{eq:1.1})}
  + \underbrace{\sum_{i=0}^{m-1}\sum_{j=0}^{i}\binom{i}{j}2^{n-i-1}}_{(\ref{eq:1.2})} \nonumber \\
  &= \left(2^{k}\middle(\frac{3}{2}\middle)^{m}(k+1-\frac{m}{3})-2^k(k+1)\right) + 
  (m2^{n-1}) \nonumber
    \end{align}

    \begin{align}
      \sum_{i=0}^{m-1}\sum_{j=0}^{i}\binom{i}{j}&(k-j)2^{k-j-1}
      = k2^{k-1}\sum_{i=0}^{m-1}\sum_{j=0}^{i}\binom{i}{j}2^{-j}- 2^{k-1}\sum_{i=0}^{m-1}\sum_{j=0}^{i}\binom{i}{j}2^{-j}j \label{eq:1.1} \\
      &= k2^{k-1}\sum_{i=0}^{m-1}\left(\frac{3}{2}\right)^i - 2^{k-1}\frac{1}{2}\sum_{i=0}^{m-1}\left(\frac{3}{2}\right)^{i-1}i \nonumber \\
      &= k2^{k-1}2\left(\left(\frac{3}{2}\right)^m-1\right) - 2^{k-1}\left(2 - \left(\frac{3}{2}\right)^{m-1}(3-m)\right) \nonumber \\
      &= 2^k\left(\frac{3}{2}\right)^m\left(k + 1 - \frac{m}{3}\right) - 2^k(k+1) \nonumber
    \end{align}

    \begin{equation}
      \sum_{i=0}^{m-1}\sum_{j=0}^{i}\binom{i}{j}2^{n-i-1}=\sum_{i=0}^{m-1}\left(2^{n-i-1}\sum_{j=0}^{i}\binom{i}{j}\right)=\sum_{i=0}^{m-1}2^{n-i-1}2^i=m2^{n-1}
      \label{eq:1.2}
    \end{equation}

    \begin{align}
      &\sum_{i=0}^{m}\binom{m}{i}(2^{n-m-2}(k-i)^2 + 2^{n-m-2}(k-i) - 2^{k-i-1}(k-i) \nonumber \\
      &\quad\quad\quad - 2^{n-m-k+i} + 2^{n-m+1}-2^{k-i}) \label{eq:2} \\
      &= \sum_{i=0}^{m}\binom{m}{i}2^{n-m-2}(k-i)^2 +\sum_{i=0}^{m}\binom{m}{i}2^{n-m-2}(k-i) -\sum_{i=0}^{m}\binom{m}{i}2^{k-i-1}(k-i)\nonumber \\
      &\quad- \sum_{i=0}^{m}\binom{m}{i}2^{n-m-k+i}+\sum_{i=0}^{m}\binom{m}{i}2^{n-m+1} - \sum_{i=0}^{m}\binom{m}{i}2^{k-i}\nonumber \\
      &= 2^{n-m-2}(k^22^m-km2^m+m(m+1)2^{m-2})+2^{n-m-2}(k2^m-m2^{m-1}) \nonumber \\
      &\quad- 2^{k-1}\left(k\left(\frac{3}{2}\right)^m-2^{-m}3^{m-1}m\right)- 2^{n-m-k}3^m + 2^{n+1} - 2^k\left(\frac{3}{2}\right)^m \nonumber \\
      &= 2^{n-2}\left( k^2 - km + \frac{m(m-1)}{4} + k + 8 \right) \nonumber \\
      &\quad+ 2^k\left(\frac{3}{2}\right)^m\left(-\frac{k}{2} + \frac{m}{6} - 1\right) - 2^{n-k}\left(\frac{3}{2}\right)^m \nonumber
    \end{align}

\end{proof}

\begin{theorem}
  For $m= \min(k,n-k)$, $P(n,k) \leq 2^{n-2}\left( \left(k - \frac{m}{2} - \frac{7}{4}\right)^2 + \frac{9k}{2} + \frac{79}{16} \right)$ $+ 2^k\left(\frac{3}{2}\right)^m\left(\frac{k}{2} - \frac{m}{6}\right) - 2^k(k+1) - 2^{n-k}\left(\frac{3}{2}\right)^m$.
  \label{thm:upper-bound}
\end{theorem}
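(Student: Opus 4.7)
The plan is straightforward: this theorem is essentially a restatement of Lemma \ref{lma:bigineq} with the optimal choice $m = \min(k, n-k)$, rewritten so that the quadratic part is presented as a completed square. The argument therefore has two ingredients.

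First I would invoke Lemma \ref{lma:unfold} at $m = \min(k, n-k)$ (which is within the allowed range) to obtain $P(n,k) = P(n,k,m)$, and then apply Lemma \ref{lma:bigineq} to bound $P(n,k,m)$ by
\begin{equation*}
2^{n-2}\left(\left(k - \tfrac{m}{2}\right)^2 + k + \tfrac{7m}{4} + 8\right) + 2^k\left(\tfrac{3}{2}\right)^m\left(\tfrac{k}{2} - \tfrac{m}{6}\right) - 2^k(k+1) - 2^{n-k}\left(\tfrac{3}{2}\right)^m.
\end{equation*}

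Second, I would verify that the quadratic-in-$k$ expression inside the first parenthesis equals the one appearing in the theorem. Expanding the completed square,
\begin{equation*}
\left(k - \tfrac{m}{2} - \tfrac{7}{4}\right)^2 = \left(k - \tfrac{m}{2}\right)^2 - \tfrac{7}{2}\left(k - \tfrac{m}{2}\right) + \tfrac{49}{16} = \left(k - \tfrac{m}{2}\right)^2 - \tfrac{7k}{2} + \tfrac{7m}{4} + \tfrac{49}{16},
\end{equation*}
so adding $\tfrac{9k}{2} + \tfrac{79}{16}$ yields exactly $\left(k - \tfrac{m}{2}\right)^2 + k + \tfrac{7m}{4} + 8$, since $\tfrac{49}{16} + \tfrac{79}{16} = 8$. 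Thus the two forms coincide term-by-term and the bound claimed in the theorem follows directly from Lemma \ref{lma:bigineq}.

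There is no real obstacle here: both steps are routine. The only mildly delicate point is confirming that $m = \min(k,n-k)$ lies in the range $0 \leq m \leq \min(k,n-k)$ required by Lemma \ref{lma:unfold}, which is immediate. The rest is a completed-square identity, and the purpose of the reformulation is presumably to make the dominant quadratic behavior in $k$ (and the dependence on whether $k \leq n/2$ or $k > n/2$, which determines $m$) visually transparent for the subsequent comparison with $|pw\_sel^n_k|$ and $|bit\_sel^n_k|$ in the plots that follow.
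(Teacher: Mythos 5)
Your proposal is correct and matches the paper's own argument, which likewise derives the theorem directly from Lemmas \ref{lma:unfold} and \ref{lma:bigineq}; the completed-square identity $\left(k-\frac{m}{2}-\frac{7}{4}\right)^2+\frac{9k}{2}+\frac{79}{16}=\left(k-\frac{m}{2}\right)^2+k+\frac{7m}{4}+8$ that you verify is exactly the algebra the paper leaves implicit.
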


\begin{proof}
  Directly from Lemmas \ref{lma:unfold} and \ref{lma:bigineq}.
\end{proof}

  We will now present the {\em size difference} $SD(n,k)$ between pairwise selection network and our network. Merging step in
  $pw\_sel^N_K$ costs $2^{k}k - 2^k + 1$ and in $pw\_hbit\_sel^N_K$: $2^{k-1}k$, so the difference is given by the following equation:

  \begin{equation}
    SD(n,k) = \left\{ 
    \begin{array}{l l}
      0 & \quad \text{if $n=k$} \\
      0 & \quad \text{if $k=0$} \\
      2^{k-1}k - 2^k + 1 + & \\
      + SD(n-1,k) + SD(n-1,k-1) & \quad \text{if $0<k<n$}
    \end{array} \right.
  \label{eq:pw_size_diff}
  \end{equation}

  \begin{theorem}
    Let $S_{n,k}=\sum_{j=0}^k\binom{n-k+j}{j}2^{k-j}$. Then:

    \[
      SD(n,k) = \binom{n}{k}\frac{n+1}{2} - S_{n,k}\frac{n-2k+1}{2} - 2^k(k-1)-1
    \]
  \end{theorem}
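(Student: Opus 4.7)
The plan is to verify the closed form by induction on $n$, directly against the recurrence (\ref{eq:pw_size_diff}). Write $F(n,k)$ for the proposed right-hand side.

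First I would dispose of the boundary cases. For $k=0$ one has $S_{n,0}=\binom{n}{0}\cdot 2^{0}=1$, so $F(n,0)=\frac{n+1}{2}-\frac{n+1}{2}-2^{0}(-1)-1=0$. For $k=n$ one has $S_{n,n}=\sum_{j=0}^{n}2^{n-j}=2^{n+1}-1$; substituting and simplifying (the $(n-1)\cdot 2^{n+1}$ appearing in $S_{n,n}\cdot\frac{n-1}{2}$ exactly cancels $2^{n}(n-1)$) gives $F(n,n)=0$. Both match $SD(n,0)=SD(n,n)=0$.

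For the inductive step with $0<k<n$, I would assume the formula for $SD(n-1,k)$ and $SD(n-1,k-1)$, substitute into (\ref{eq:pw_size_diff}), and compare the result with $F(n,k)$. Grouping by term type, the binomial part combines via Pascal's identity $\binom{n-1}{k}+\binom{n-1}{k-1}=\binom{n}{k}$, and the pure $2^{k}$- and $2^{k-1}$-constants coming from $-2^{k}(k-1)-1$ (at index $k$), $-2^{k-1}(k-2)-1$ (at index $k-1$), and the merger saving $2^{k-1}k-2^{k}+1$ cancel cleanly against the $-2^{k}(k-1)-1$ inside $F(n,k)$. After this accounting, what remains is the single equation $\frac{1}{2}\bigl(S_{n-1,k}-S_{n-1,k-1}-\binom{n}{k}\bigr)=0$.

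The crux is therefore the identity $S_{n-1,k}-S_{n-1,k-1}=\binom{n}{k}$. I would obtain it from two smaller facts, each read off from the definition of $S_{n,k}$. The first is the Pascal-like relation $S_{n,k}=S_{n-1,k}+S_{n-1,k-1}$, proved by applying $\binom{n-k+j}{j}=\binom{n-1-k+j}{j}+\binom{n-1-k+j}{j-1}$ inside the sum and re-indexing the second piece by $j\mapsto j-1$. The second is $S_{n,k}=2\,S_{n-1,k-1}+\binom{n}{k}$, obtained by isolating the $j=k$ term (which contributes $\binom{n}{k}$) and shifting the remaining summation index. Subtracting gives exactly $S_{n-1,k}-S_{n-1,k-1}=\binom{n}{k}$.

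The main obstacle is administrative rather than conceptual: the closed form mixes several kinds of terms ($\binom{n}{k}$, $S_{n,k}$, $2^{k}k$, $2^{k}$, constants) weighted by half-integer coefficients, so one has to track contributions from the two inductive subproblems carefully before the $S$-identity can be applied. Once that identity is in hand the verification closes automatically.
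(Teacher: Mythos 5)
Your proposal is correct and follows essentially the same route as the paper: verify the boundary cases $k=0$ and $k=n$ from $S_{n,0}=1$ and $S_{n,n}=2^{n+1}-1$, then induct on the recurrence for $SD$ using the two identities $S_{n,k}=S_{n-1,k}+S_{n-1,k-1}$ and $S_{n,k}=2S_{n-1,k-1}+\binom{n}{k}$ (the paper states the latter as $S_{n-1,k-1}=\tfrac{1}{2}(S_{n,k}-\binom{n}{k})$), which combine to the key fact $S_{n-1,k}-S_{n-1,k-1}=\binom{n}{k}$. The only difference is cosmetic: you actually sketch derivations of the $S$-identities, which the paper dismisses as ``straightforward calculation.''
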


  \begin{proof}
    By straightforward calculation one can verify that $S_{n,0} = 1$, $S_{n,n}
    = 2^{n+1} - 1, S_{n-1,k-1} = \frac{1}{2}(S_{n,k} - \binom{n}{k})$ and
    $S_{n-1,k-1} + S_{n-1,k} = S_{n,k}$. It follows that the theorem is true
    for $k = 0$ and $k = n$. We prove the theorem by induction on pairs
    $(k,n)$. Take any $(k,n)$, $0 < k < n$, and assume that theorem holds for
    every $(k',n') \prec (k,n)$ (in lexicographical order). Then we have:

    \begin{align*}
      SD(n,k)&= 2^{k-1}k - 2^k + 1 + SD(n-1,k) + SD(n-1,k-1) \\
      &= 2^{k-1}k - 2^k + 1 + \binom{n-1}{k}\frac{n}{2} + \binom{n-1}{k-1}\frac{n}{2} - 2^k(k-1)-1 \\
      &\quad -2^{k-1}(k-2)-1 - (S_{n-1,k}\frac{n-2k}{2} + S_{n-1,k-1}\frac{n-2k+2}{2}) \\
      &= \binom{n}{k}\frac{n}{2} - S_{n,k}\frac{n-2k}{2} - S_{n-1,k-1} - 2^k(k-1)-1\\
      &= \binom{n}{k}\frac{n+1}{2}- S_{n,k}\frac{n-2k+1}{2} - 2^k(k-1)-1
    \end{align*}
  \end{proof}
  
\begin{corollary}
  $|pw\_sel^N_{N/2}| - |pw\_hbit\_sel^N_{N/2}| = N\frac{\log N - 4}{2} + \log N + 2$, for $N=2^n$.
\end{corollary}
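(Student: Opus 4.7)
The plan is to apply the preceding theorem with $k$ chosen so that $K = N/2$. Since $K = 2^k$ and $N = 2^n$, this forces $k = n-1$, and we will use $\log N = n$ at the very end to translate back.

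Instantiating
\[
SD(n,k) = \binom{n}{k}\frac{n+1}{2} - S_{n,k}\frac{n-2k+1}{2} - 2^k(k-1)-1
\]
at $k = n-1$ makes the easy pieces immediate: $\binom{n}{n-1}=n$, the coefficient $\frac{n-2(n-1)+1}{2} = \frac{3-n}{2}$, and the trailing term is $2^{n-1}(n-2)+1$. The only non-routine ingredient is a closed form for
\[
S_{n,n-1} = \sum_{j=0}^{n-1}\binom{j+1}{j}\,2^{n-1-j} = \sum_{j=0}^{n-1}(j+1)\,2^{n-1-j}.
\]
Reindexing by $i = n-1-j$ rewrites this as $\sum_{i=0}^{n-1}(n-i)2^i = n(2^n-1) - \sum_{i=0}^{n-1} i\,2^i$, and invoking the standard identity $\sum_{i=0}^{m} i\,2^i = (m-1)\,2^{m+1}+2$ at $m=n-1$ gives $S_{n,n-1} = 2^{n+1} - n - 2$.

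Substituting into the theorem and expanding $(2^{n+1}-n-2)\cdot\tfrac{n-3}{2}$, the exponential contributions collect as $n\cdot 2^n - 3\cdot 2^n - (n-2)\,2^{n-1} = 2^{n-1}(n-4)$, while the polynomial-in-$n$ remainder simplifies via $\tfrac{n(n+1) - (n^2-n-6)}{2} - 1 = n+2$. Thus $SD(n,n-1) = 2^{n-1}(n-4) + n + 2$, and rewriting $2^{n-1} = N/2$ and $n = \log N$ yields precisely $N\tfrac{\log N - 4}{2} + \log N + 2$.

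There is no conceptual obstacle: the combinatorial heavy lifting has already been done by the theorem, and the sole piece of genuine work is recognizing $S_{n,n-1}$ as a standard arithmetic-geometric sum. The main place to be careful is the bookkeeping when combining the $2^n$ and $2^{n-1}$ terms, which must cancel down to the clean factor $2^{n-1}(n-4)$; a sign error there is the most plausible way to produce a wrong constant.
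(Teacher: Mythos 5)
Your proposal is correct and follows exactly the route the paper intends for this corollary: substitute $k=n-1$ into the theorem's closed form for $SD(n,k)$, evaluate $S_{n,n-1}=\sum_{j=0}^{n-1}(j+1)2^{n-1-j}=2^{n+1}-n-2$ as an arithmetic--geometric sum, and simplify. All intermediate values ($\binom{n}{n-1}=n$, coefficient $\tfrac{3-n}{2}$, the collection of exponential terms into $2^{n-1}(n-4)$, and the polynomial remainder $n+2$) check out.
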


%  \begin{center}
%    \begin{tabular}{c|c}      
%      $k$ & $SD(n,k)$ \\ \hline
%      $0$ & $0$ \\
%      $1$ & $0$ \\
%      $2$ & $n - 2$ \\
%      $3$ & $\frac{1}{2}n^2 + \frac{5}{2}n - 12$ \\
%      $n/2$ & $\approx 2^{n+1/2}\frac{n+1}{\sqrt{\pi n}} - 2^n - \sqrt{2}^n(n-2)-2$ \\
%      $n-1$ & $2^{n-1}n - 2^{n+1} + n + 2$ \\
%    \end{tabular}
%  \end{center}

  Plots in figure \ref{fig:prd} show how much $pw\_sel$ and the upper bound from Theorem \ref{thm:upper-bound}
  are worse than $pw\_hbit\_sel$.
  Lines labeled {\em codish} are plotted from $(|pw\_sel^N_K|-|pw\_hbit\_sel^N_K|)/|pw\_hbit\_sel^N_K|$ and the ones labeled {\em upper}
  are plotted from the formula $(|upper^N_K|-|pw\_hbit\_sel^N_K|)/|pw\_hbit\_sel^N_K|$, where $|upper^N_K|$ is the upper bound from
  Theorem \ref{thm:upper-bound}. Both $|pw\_sel^N_K|$ and $|pw\_hbit\_sel^N_K|$ were computed directly from recursive formulas.
  We can see that we save the most number of comparators when $k$ is larger than $n/2$, nevertheless for small values of $n$ superiority
  of our network is apparent for any $k$. As for the upper bound, it gives a good approximation of $|pw\_hbit\_sel^N_K|$ when $n$ is small
  , but for larger values of $n$ it becomes less satisfactory.

  \begin{figure}[ht]
    \begin{center}
      \subfloat[$N=2^7$\label{fig:prd1}]{%
        \includegraphics[width=0.31\textwidth]{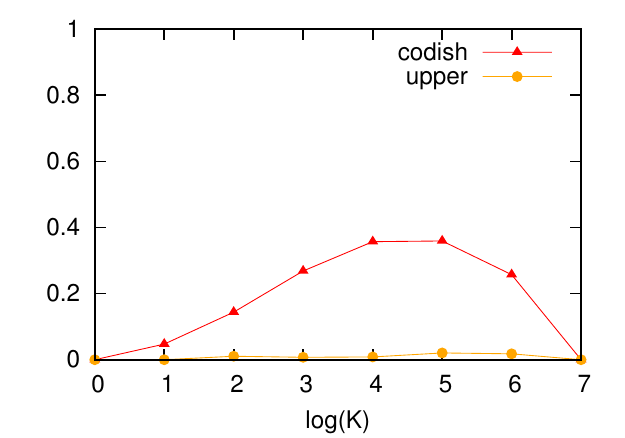}
      }
      ~
      \subfloat[$N=2^{15}$\label{fig:prd2}]{%
      \includegraphics[width=0.31\textwidth]{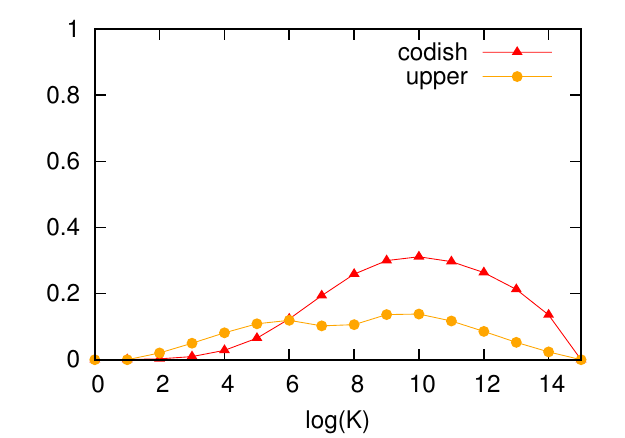}
      }
      ~
      \subfloat[$N=2^{31}$\label{fig:prd3}]{%
      \includegraphics[width=0.31\textwidth]{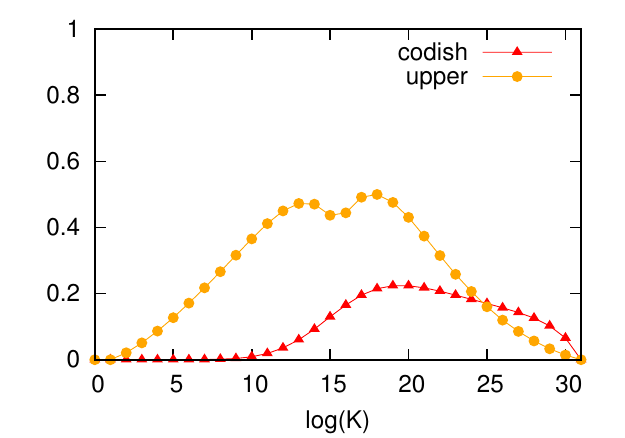}
      }
    \end{center}
    \caption{Comparison of pairwise selection networks for pratical values of $n$ and $k$.}
    \label{fig:prd}
  \end{figure}
  
\section{Arc-consistency of selection networks}

In this section we prove that half encoding of any selection network preserves
arc-consistency with respect to "less-than" cardinality constraints. The proof can
be generalized to other types of cardinality constraints.

  We introduce the convention, that $\tuple{x_1,\ldots,x_n}$ will denote the input and 
  $\tuple{y_1,\ldots,y_n}$ will denote the output of some order $n$ comparator network. We would also like to view them as
  sequences of boolean variables, that can be set to either true ($1$), false ($0$) or undefined ($X$).

  From now on we assume that every network $f$
  is half encoded and when we say "comparator" or "network", we view it in terms of CNF formulas.
  We denote $V[\phi(f)]$ to be the set of variables in encoding $\phi(f)$.

  \begin{observation}
    A single comparator $hcomp(a,b,c,d)$ has the following {\em propagation properties}:

    \begin{enumerate}
      \item If $a=1$ or $b=1$, then UP sets $c=1$ (by $\ref{eq:hcomp}.c1$ or $\ref{eq:hcomp}.c2$).
      \item If $a=b=1$, then UP sets $c=d=1$ (by $\ref{eq:hcomp}.c1$ and $\ref{eq:hcomp}.c3$).
      \item If $c=0$, then UP sets $a=b=0$ (by $\ref{eq:hcomp}.c1$ and $\ref{eq:hcomp}.c2$).
      \item If $b=1$ and $d=0$, then UP sets $a=0$ (by $\ref{eq:hcomp}.c3$).
      \item If $a=1$ and $d=0$, then UP sets $b=0$ (by $\ref{eq:hcomp}.c3$).
    \end{enumerate}
    
    \label{obs:pprop}
  \end{observation}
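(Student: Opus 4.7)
My plan is to treat the observation as a direct unit-propagation check on the three clauses $c1,c2,c3$ that make up $hcomp$ in (\ref{eq:hcomp}): for each of the five items I will identify the clause (or pair of clauses) which, under the stated partial assignment, has all but one of its literals falsified, and then read off the literal that unit propagation forces. Since $c1 = (\neg a \vee c)$ and $c2 = (\neg b \vee c)$ are related by swapping $a$ with $b$, and $c3 = (\neg a \vee \neg b \vee d)$ is symmetric in the two inputs, items (1), (4), and (5) come in symmetric pairs, so only one direction in each pair really needs to be argued.

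Concretely, for item (1) I would note that $a = 1$ reduces $c1$ to the unit $c$ (and symmetrically $b = 1$ reduces $c2$), forcing $c = 1$. Item (2) then follows by combining item (1) with the fact that $a = b = 1$ falsifies both negative literals of $c3$, leaving $d$ as the sole remaining literal. For item (3), $c = 0$ simultaneously eliminates the positive literal of $c1$ and of $c2$, so $\neg a$ and $\neg b$ are forced. For items (4) and (5), fixing one input to $1$ together with $d = 0$ collapses $c3$ to the negation of the other input, which is then propagated to $0$. Each of these is a one-line inspection, and together they exhaust the five cases.

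\textbf{Expected difficulty.} I expect no real obstacle here: the statement is a finite propositional check on three short clauses with no structural or inductive content, and it is already phrased in a form that names the relevant clause for every item. The only thing I would be careful about is to present the five items as exactly the local propagation primitives that the arc-consistency proof for selection networks will invoke later, rather than attempting an exhaustive enumeration of all UP consequences of $hcomp$; in that sense the observation is really a lemma tailored to its downstream use, and the proof should stay at that level.
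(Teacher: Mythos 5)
Your proposal is correct and matches the paper exactly: the paper gives no separate proof for this observation, instead justifying each item inline by the parenthetical references to clauses $c1$, $c2$, $c3$ of (\ref{eq:hcomp}), which is precisely the direct unit-propagation check you describe. Nothing is missing.
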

  
  \begin{lemma}
    Let $f^n_k$ be a selection network. Assume that $k-1$ inputs are set to $1$, and rest of the variables are undefined.
    Unit propagation will set variables $y_1,\ldots,y_{k-1}$ to $1$.
    \label{lma:fprop}
  \end{lemma}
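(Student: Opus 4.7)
The plan is to argue by a monotonicity/simulation argument: I will compare the partial state produced by unit propagation with the concrete state obtained by replacing every undefined input by $0$, and show that every wire that takes value $1$ in the concrete state is forced to $1$ by UP.

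First I would set up notation. Let $\bar x$ be the partial input (with exactly $k-1$ ones and the remaining $n-k+1$ inputs undefined). Fix a topological order of the comparators of $f^n_k$, and let $S$ denote the assignment to all wires reached by UP on the half-encoded formula restricted by $\bar x$. Let $\bar x^*$ be the completion of $\bar x$ obtained by setting every undefined input to $0$, and let $S^*$ be the concrete $\{0,1\}$-assignment obtained by simulating the comparators on $\bar x^*$. The target claim (for $y_1, \dots, y_{k-1}$) will fall out of two facts: a structural fact about $S^*$ coming from the selection-network property, and a propagation fact relating $S^*$ to $S$.

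For the propagation fact, I would prove by induction on the topological depth of a wire $w$ that if $S^*(w)=1$, then $S$ also assigns $w=1$. The base case is immediate: input wires with $S^*(w)=1$ were originally set to $1$ in $\bar x$, so $S(w)=1$. For the inductive step, take a comparator $hcomp(a,b,c,d)$. If $S^*(c)=1$, then $S^*(a)=1$ or $S^*(b)=1$; by induction $S(a)=1$ or $S(b)=1$, and item 1 of Observation~\ref{obs:pprop} (via clause $c1$ or $c2$) forces $S(c)=1$. If $S^*(d)=1$, then $S^*(a)=S^*(b)=1$; by induction $S(a)=S(b)=1$, and item 2 of Observation~\ref{obs:pprop} (via clause $c3$) forces $S(d)=1$ (and also $S(c)=1$). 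This is exactly the place where half encoding suffices: we only need the forward-propagation of $1$s, never the propagation of $0$s.

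For the structural fact, since $f^n_k$ is a selection network applied to $\bar x^*$, which is a $\{0,1\}$-vector with exactly $k-1$ ones, the output $f^n_k(\bar x^*)$ is top-$k$ sorted. The definition of top-$k$ sorted forces $\langle y_1,\dots,y_k\rangle$ to be weakly decreasing and to dominate the remaining outputs; since the multiset of outputs equals the multiset of inputs (comparators are permutations of values), there are exactly $k-1$ ones among the outputs, and all must be concentrated at the top, giving $S^*(y_1)=\cdots=S^*(y_{k-1})=1$. Combining with the induction above yields $S(y_1)=\cdots=S(y_{k-1})=1$, which is the desired conclusion. The only real subtlety is recognizing that the half encoding still supports the two needed propagation rules; no calculation is required beyond the straightforward induction.
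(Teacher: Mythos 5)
Your proposal is correct and follows essentially the same route as the paper: forward propagation of the $1$s through each comparator (using only clauses $c1$--$c3$ of the half encoding) combined with the selection-network property to conclude that the $k-1$ ones land on $y_1,\ldots,y_{k-1}$. The paper states this informally (``no $1$ is lost, so they must all reach the outputs''), whereas your simulation against the concrete completion $\bar x^*$ with undefined inputs set to $0$ is a cleaner, more rigorous formalization of the very same argument.
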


  \begin{proof}
    From propagation properties of $hcomp(a,b,c,d)$ we can see that if comparator receives two $1$s, then it outputs two $1$s,
    when it receives $1$ on one input and $X$ on the other, then it outputs $1$ on the upper output and $X$ on the lower output. From this
    we conclude that a single comparator will sort its inputs, as long as one of the inputs is set to $1$.
    No $1$ is lost, so they must all reach the outputs. Because the comparators comprise a selection network, the $1$s will appear
    at outputs $y_1,\ldots,y_{k-1}$.
  \end{proof}

  The process of propagating $1$s we call a {\em forward propagation}. For the remainder of this section assume that:
  $f^n_k$ is a selection network; $k-1$ inputs are set to $1$, and the rest of the variables are undefined; forward propagation
  has been performed resulting in $y_1,\ldots,y_{k-1}$ to be set to $1$.

  \begin{definition}[path]
    A {\em path} is a sequence of boolean variables
    $\tuple{z_1,\ldots,z_m}$ such that $\forall_{1\leq i \leq m} z_i \in V[\phi(f^n_k)]$ and for all $1 \leq i < m$ there exists
    a comparator $hcomp(a,b,c,d)$ in $\phi(f^n_k)$ for which $z_i \in \{a,b\}$ and $z_{i+1} \in \{c,d\}$.
  \end{definition}
  
  \begin{definition}[propagation path]
    Let $x$ be an undefined input variable. A path $\bar{z}_x =\tuple{z_1,\ldots,z_m}$ $(m \geq 1)$
    is a {\em propagation path}, if $z_1\equiv x$
    and $\tuple{z_2,\ldots,z_m}$ is the sequence of variables that would be set to $1$ by UP, if we would set $z_1=1$.
  \end{definition}

  \begin{lemma}
    If $\bar{z}_x=\tuple{z_1,\ldots,z_m}$ is a propagation path for an undefined variable $x$, then $z_m \equiv y_k$.
    \label{lma:prpp}
  \end{lemma}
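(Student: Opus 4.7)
The plan is to prove this by simulating a second round of forward propagation. Hypothetically setting $z_1 \equiv x = 1$ on top of the state reached after the initial forward propagation is equivalent to restarting unit propagation on an input with $k$ ones (the $k-1$ inputs already set, plus $x$). The very argument used in the proof of Lemma \ref{lma:fprop}---that comparators with at least one $1$-input act as local sorters pushing $1$s to the upper output, and that no $1$ is lost---applies verbatim when $k$ inputs are $1$. It yields that UP must eventually set $y_1,\ldots,y_k$ to $1$. Since $y_1,\ldots,y_{k-1}$ were already $1$ before setting $x$, the only newly set output variable is $y_k$.

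Next I would argue that the variables newly forced to $1$ form a single chain of comparator wires starting at $x$, so that the ``sequence of variables set to $1$'' named in the definition of a propagation path is a well-defined, non-branching path. At each step, the currently propagating wire $z_i$ enters exactly one comparator $hcomp(a,b,c,d)$. Since no variable has been set to $0$, the other input is either undefined or already equal to $1$. In the former case, clause $z_i \Rightarrow c$ sets $c$ while $d$ remains unforced, so only $c$ is newly $1$; in the latter case, $c$ was already $1$ from forward propagation, and only clause $z_i \wedge b \Rightarrow d$ fires, newly setting $d$. In either case exactly one new wire is produced, and UP extends the chain by a unique successor $z_{i+1}$.

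Finally, UP halts on the chain precisely when the current wire feeds no further comparator, i.e., when it is an output variable. Because $y_k$ is the unique newly set output by the first paragraph, and any output appearing on the chain must terminate it (outputs have no outgoing comparator), the chain must end at $y_k$, giving $z_m \equiv y_k$.

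The main obstacle is the second step: one needs to be sure that no stray branching ever occurs, even when the chain enters a comparator whose other input was already driven to $1$ by the preceding forward propagation---this is the subtle case where the upper output is ``wasted'' (already $1$) and propagation advances through the lower output instead. Once this case analysis is cleanly laid out, both the well-definedness of the path and its termination at $y_k$ follow immediately from Lemma \ref{lma:fprop}.
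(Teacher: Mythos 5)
Your proposal is correct and follows essentially the same route as the paper: the paper's proof is just your first paragraph, observing that with $k$ inputs set to $1$ the selection property forces UP to set $y_k$, so the newly propagated value must surface there. Your second and third paragraphs (the non-branching case analysis showing exactly one new wire is produced at each comparator) are a welcome elaboration of a step the paper leaves implicit here and only spells out later, inside the proof of Lemma \ref{lma:hier}.
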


  \begin{proof}
    Remember that all $y_1,\ldots,y_{k-1}$ are set to $1$. Setting any undefined input variable $x$ to
    $1$ will result in UP to set $y_k$ to $1$.
    Otherwise $f^n_k$ would not be a selection network.
  \end{proof}

  The following lemma shows that propagation paths are deterministic.
  
  \begin{lemma}
    Let $\bar{z}_x=\tuple{z_1,\ldots,z_m}$ be a propagation path. For each $1 \leq i \leq m$ and $z'_1 \equiv z_i$, if
    $\tuple{z'_1,\ldots,z'_{m'}}$ is a path that would be set to $1$ by UP if we would set $z'_1=1$, then
    $\tuple{z'_1,\ldots,z'_{m'}} = \tuple{z_i,\ldots,z_m}$.
    \label{lma:hier}
  \end{lemma}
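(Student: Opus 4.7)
The plan is to proceed by downward induction on $i$ from $m$ to $1$, showing that UP starting from $z'_1=z_i=1$ (in the state obtained after forward propagation of the $k-1$ initial ones) produces exactly the suffix $\tuple{z_i,\ldots,z_m}$. The base case $i=m$ is immediate: by Lemma \ref{lma:prpp}, $z_m \equiv y_k$ is a network output and therefore not an input of any comparator, so UP cannot propagate any further and $\bar{z}'=\tuple{z_m}$.

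For the inductive step, let $\gamma=hcomp(a,b,c,d)$ be the unique comparator having $z_i$ as an input (WLOG $z_i=a$), and write $u=b$ for the other input. Since $\tuple{z_1,\ldots,z_m}$ is a path, $z_{i+1}\in\{c,d\}$. I first pin down the state of $u$ after forward propagation: $z_{i+1}$ is undefined in that state, since it is newly set by the original UP from $z_1=x$. If $z_{i+1}=c$, then by property 1 of Observation \ref{obs:pprop} we must have $u\ne 1$ (otherwise forward propagation would already have forced $c=1$). If $z_{i+1}=d$, then property 2 must have fired at $\gamma$ when $z_i$ was set in the original UP, so $u$ was already $1$ at that moment.

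The main obstacle is ruling out, in the $z_{i+1}=d$ case, the possibility that $u$ had been set to $1$ earlier along the chain itself, i.e.\ that $u=z_j$ for some $j<i$. To exclude this I plan a DAG-level argument: assigning level $0$ to network inputs and level $\max(L(v'),L(v''))+1$ to the output of a comparator with inputs $v',v''$, one checks that $L(z_{j+1}) \ge L(z_j)+1$, so levels strictly increase along the chain. If both $z_j$ and $z_i$ (for $j<i$) were inputs of the same comparator $\gamma_j$, then $L(\gamma_j) \ge L(z_i)$, so $L(z_{j+1})=L(\gamma_j)+1\ge L(z_i)+1$, which together with $L(z_i)\ge L(z_{j+1})+(i-j-1)$ yields $L(z_i)\ge L(z_i)+(i-j)$, a contradiction. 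Consequently $u$ is not on the path, so in the $z_{i+1}=d$ case $u$ must already equal $1$ after forward propagation.

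With the value of $u$ pinned down, Observation \ref{obs:pprop} forces UP applied to $z'_1=z_i=1$ to set exactly one new wire at $\gamma$: either $c=z_{i+1}$ by property 1 (when $u\ne 1$, with $d$ untouched), or $d=z_{i+1}$ by property 2 (when $u=1$, noting $c$ was already set by forward propagation). Hence $z'_2=z_{i+1}$. Finally, setting $z_i=1$ cannot trigger backward propagation through the comparator of which $z_i$ is an output, since in the half encoding no clause involving a comparator's output becomes unit when that output is set to $1$; thus the UP continuation agrees with what would result from starting afresh with $z_{i+1}=1$, and the inductive hypothesis for $i+1$ delivers $\tuple{z_{i+1},\ldots,z_m}$, completing the proof.
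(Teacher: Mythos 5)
Your proof is correct and follows essentially the same route as the paper's: downward induction along the propagation path, with the base case at $z_m \equiv y_k$ and the inductive step showing that unit propagation at the unique comparator consuming $z_i$ deterministically produces $z_{i+1}$, so that the induction hypothesis applies to the suffix. You are in fact somewhat more careful than the paper, which only asserts that a second possible successor would mean ``a single $1$ on the input produces two $1$s on the outputs''; your case split on whether $z_{i+1}$ is the max or min output, together with the level argument ruling out that the comparator's other input lies on the path itself, fills in a detail the paper leaves implicit.
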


  \begin{proof}
    By induction on $l=m-i$. If $l=0$, then $z'_1 \equiv z_m \equiv y_k$ (by
    Lemma \ref{lma:prpp}), so the lemma holds. Let $l \geq 0$ and assume that the
    lemma is true for $z_l$. Consider $z'_1 \equiv z_{l-1} \equiv z_{m-i-1}$. Set
    $z_{m-i-1}=1$ and use UP to set $z_{m-i}=1$. Notice that $z_{m-i} \equiv
    z'_2$, otherwise there would exist a comparator $hcomp(a,b,c,d)$, for which
    $z_{m-i-1}$ is equivalent to either $a$ or $b$ and $z_{m-i} \equiv c$ and
    $z'_2 \equiv d$ (or vice versa). That would mean that a single $1$ on the
    input produces two $1$s on the outputs. This contradicts our reasoning in the
    proof of Lemma \ref{lma:fprop}. By the induction hypothesis
    $\tuple{z'_2,\ldots,z'_{m'}} = \tuple{z_{m-i},\ldots,z_m}$, so
    $\tuple{z'_1,\ldots,z'_{m'}} = \tuple{z_{m-i-1},\ldots,z_m}$.
  \end{proof}
  
  For each undefined input variable $x$ and propagation path $\bar{z}_x=\tuple{z_1,\ldots,z_m}$ we define a directed graph
  $P_x=\{\tuple{z_i,z_{i+1}} \, : \, 1 \leq i < m\}$.  

  \begin{lemma}
    Let $\{x_{i_1},\ldots,x_{i_t}\}$ ($t>0$) be the set of undefined input variables. Then $T = P_{x_{i_1}} \cup \ldots \cup P_{x_{i_t}}$
    is the tree rooted at $y_k$.
  \end{lemma}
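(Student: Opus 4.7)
The plan is to verify two properties that characterize an in-tree rooted at $y_k$: (i) every vertex of $T$ other than $y_k$ has out-degree exactly~$1$, and (ii) every vertex admits a directed walk in $T$ ending at $y_k$. Together with acyclicity these force the claimed structure, since edges are oriented from the input side toward the output.

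First I would dispatch (ii). For every undefined input $x_{i_j}$ the propagation path $\bar z_{x_{i_j}} = \tuple{z_1,\ldots,z_m}$ is, by construction, a directed walk in $P_{x_{i_j}} \subseteq T$ from $x_{i_j}$ to $y_k$; here $z_m \equiv y_k$ by Lemma~\ref{lma:prpp}. Taking suffixes, every vertex that appears in any $P_{x_{i_s}}$ — i.e., every vertex of $T$ — reaches $y_k$ along edges of $T$.

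The substantive step is (i), and it is where Lemma~\ref{lma:hier} does the work. Fix any $v$ appearing in $T$ with $v \not\equiv y_k$. Then $v$ lies on at least one propagation path $\bar z_{x_{i_j}}$, and Lemma~\ref{lma:hier} says that the sequence UP would produce after setting $v = 1$ is uniquely determined and coincides with the suffix of $\bar z_{x_{i_j}}$ starting at $v$. Hence whichever other $P_{x_{i_s}}$'s happen to pass through $v$ must all agree on its successor, so $v$ has out-degree exactly~$1$ in $T$, while $y_k$ has out-degree~$0$.

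The remaining concern is the possibility of directed cycles, but this dissolves on inspection: iterating the unique outgoing edge from any $v$ merely unrolls the deterministic suffix of a propagation path and therefore reaches $y_k$ in finitely many steps, so no vertex can be revisited. A finite directed graph in which every non-root vertex has out-degree~$1$, every vertex reaches a common sink, and no directed cycle exists is exactly an in-tree rooted at that sink; consequently $T$ is a tree rooted at $y_k$. The only genuine obstacle is ruling out that distinct propagation paths disagree on the successor of a shared vertex, and Lemma~\ref{lma:hier} is precisely the tool that eliminates it.
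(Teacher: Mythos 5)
Your proof is correct, but it takes a different route from the paper's. The paper argues by induction on $t$, the number of undefined inputs: assuming $P_{x_{i_1}}\cup\ldots\cup P_{x_{i_t}}$ is already a tree rooted at $y_k$, it considers the first vertex $z$ at which the new path $P_{x_{i_{t+1}}}$ meets that tree (such a $z$ exists since all paths end at $y_k$ by Lemma~\ref{lma:prpp}) and invokes Lemma~\ref{lma:hier} to conclude that from $z$ onward the new path coincides with an existing one; hence the new path only grafts a fresh tail onto the tree at $z$, preserving treeness. You instead give a direct, non-inductive argument via the standard characterization of an in-tree: every non-root vertex has out-degree exactly one (this is where you, like the paper, use the determinism of Lemma~\ref{lma:hier}), $y_k$ is a sink reachable from every vertex (Lemma~\ref{lma:prpp} plus taking suffixes), and there are no directed cycles because iterating the unique successor map reaches the out-degree-zero vertex $y_k$, which is impossible from a vertex lying on a cycle. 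Both arguments rest on the same two lemmas; yours trades the induction for a small amount of graph-theoretic bookkeeping (the edge count $|E|=|V|-1$ plus connectivity, or equivalently the functional-graph characterization), and in return handles all paths at once and makes explicit the acyclicity that the paper leaves implicit. Either version is acceptable; yours is arguably the more self-contained of the two.
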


  \begin{proof}
    By induction on $t$. If $t=1$, then $T=P_{x_{i_1}}$ and by Lemma \ref{lma:prpp}, $P_{x_{i_1}}$ ends in $y_k$,
    so the lemma holds. Let $t>0$ and assume that the lemma is true for $t$. We will 
    show that it is true for $t+1$. Consider $T=P_{x_{i_1}} \cup \ldots \cup P_{x_{i_{t}}} \cup P_{x_{i_{t+1}}}$.
    By the induction hypothesis $T'=P_{x_{i_1}} \cup \ldots \cup P_{x_{i_{t}}}$ is a tree rooted at $y_k$. By Lemma \ref{lma:prpp},
    $V(P_{x_{i_{t+1}}}) \cap V(T') \neq \emptyset$. Let $z \in V(P_{x_{i_{t+1}}})$ be the first variable, such that $z \in V(T')$. Since
    $z \in V(T')$, there exists $j$ ($1\leq j \leq  t$) such that $z \in P_{x_{i_j}}$. By Lemma \ref{lma:hier},
    starting from variable $z$, paths $P_{x_{i_{t+1}}}$ and $P_{x_{i_j}}$ are identical.
  \end{proof}

  Graph $T$ from the above lemma will be called a {\em propagation tree}.
  
  \begin{theorem}
    If we set $y_k=0$, then unit
    propagation will set all undefined input variables to $0$.
  \end{theorem}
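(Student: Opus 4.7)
The plan is to do backward induction along each propagation path produced by the preceding lemmas. Fix any undefined input variable $x$ and consider its propagation path $\bar{z}_x = \tuple{z_1,\ldots,z_m}$, so that $z_1 \equiv x$ and, by Lemma \ref{lma:prpp}, $z_m \equiv y_k$. I will show that once $y_k$ is set to $0$ and UP is invoked, every $z_i$ on the path is forced to $0$; taking $i = 1$ then yields the theorem, and repeating the argument over all propagation paths (equivalently, over every branch of the propagation tree) covers all undefined inputs.

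The base case $z_m \equiv y_k = 0$ is the hypothesis. For the inductive step, suppose UP has already set $z_{i+1} = 0$ and let $hcomp(a,b,c,d)$ be the comparator witnessing the edge from $z_i$ to $z_{i+1}$, with $z_i \in \{a,b\}$ and $z_{i+1} \in \{c,d\}$. I split on whether $z_{i+1}$ is the upper or the lower output. If $z_{i+1} \equiv c$, property~3 of Observation \ref{obs:pprop} fires and forces both inputs of the comparator to $0$, so in particular $z_i = 0$. If $z_{i+1} \equiv d$, I need the partner of $z_i$ in this comparator to presently carry the value~$1$, so that property~4 or~5 of Observation \ref{obs:pprop} can apply and drive $z_i$ to~$0$.

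The main subtlety is the claim in the second case that the partner is in fact set to~$1$. The argument goes via the definition of the propagation path: in the hypothetical forward run that propagates a single~$1$ from $z_1$ through $\bar{z}_x$, the step from $z_i = 1$ to $z_{i+1} \equiv d = 1$ can only be produced by property~2 of Observation \ref{obs:pprop}, which requires both inputs of the comparator to be~$1$ at that moment; property~1 alone would instead deposit the new~$1$ on the upper output $c$, not on $d$. Hence the partner of $z_i$ was already set to~$1$ in the actual state produced by the forward propagation of Lemma \ref{lma:fprop}. Since UP is monotone and the whole formula is satisfiable under the current partial assignment — completing the input with $0$'s gives exactly $k-1$ ones, which any selection network maps to an output with $y_k = 0$ — no~$1$ established during forward propagation can be invalidated by adding $y_k = 0$, so the partner retains its value and the required UP step goes through.

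Combining the two cases completes the inductive step and gives $z_1 \equiv x = 0$ for every undefined input $x$. I expect the main obstacle to be phrasing the non-interference between forward and backward propagation cleanly; the explicit satisfying assignment above makes it immediate that the backward sweep from $y_k$ cannot override any of the~$1$'s sitting on the propagation tree, which is exactly what the Case 2 argument needs.
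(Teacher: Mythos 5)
Your proof is correct and follows essentially the same route as the paper: a backward induction from $y_k$ using propagation properties 3--5 of Observation \ref{obs:pprop}, with the key point in the lower-output case being that the sibling input already carries a $1$ from the forward propagation (which you justify via the definition of the propagation path, as the paper does implicitly via Lemma \ref{lma:hier}). The only difference is organizational --- you induct along each propagation path separately where the paper inducts on the height of the propagation tree --- and your explicit remark that UP never retracts the $1$'s is a harmless extra precaution.
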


  \begin{proof}
    Let $T$ be the propagation tree rooted at $y_k$.
    We prove by induction on the height $h$ of $T$, that (*) if we set root of $T$ to $0$, then all nodes of the tree will be set to $0$,
    thus all undefined input variables will also be set to $0$. If $h=0$, then $V=\{y_k\}$, so (*) is trivially true. Let $h>0$ and assume
    that (*) holds. We will show that (*) holds for height $h+1$.
    Let $T'$ be the propagation tree of height $h+1$ and let $r=0$ be the root. Consider children of $r$
    in $T'$ and a comparator $hcomp(a,b,c,d)$ for which $r \in \{c,d\}$:

    Case 1: $r$ has two children. The only case is when $r\equiv c=0$. Unit propagation sets $a=b=0$.
    Nodes $a$ and $b$ are roots of propagation trees of height $h$ and are set to $0$,
    therefore by the induction hypothesis all nodes in $T'$ will be set to $0$.

    Case 2: $r$ has one child. Consider two cases: (i) if $r\equiv c=0$ and either $a$ or $b$ is the child of $r$, then UP
    sets $a=b=0$ and either $a$ or $b$ is the root of propagation tree of height $h$ and is set to $0$,
    therefore by the induction hypothesis all nodes in $T'$ will be set to $0$, (ii) $r \equiv d=0$
    and either $a=c=1$ and $b$ is the child of $r$ or
    $b=c=1$ and $a$ is the child of $r$. Both of them will be set to $0$ by UP and again we get the root of
    propagation tree of height $h$ that is set to $0$, therefore by the induction hypothesis all nodes in $T'$ will be set to $0$.
  \end{proof}

\section{Conclusions}

We have constructed a new family of selection networks, which are based on the
pairwise selection ones, but require less comparators to merge subsequences. The
difference in sizes grows with $k$ and is equal to $n\frac{\log n - 4}{2} + \log n
+ 2$ for $k = n/2$. In addition, we have shown that any selection network encoded
in a standard way to a CNF formula preserves arc-consistency with respect to a
corresponding cardinality constraint. This property is important, as many
SAT-solvers take advantage of arc-consistency, making the computation
significantly faster.

It's also worth noting that using encodings based on selection networks give an
extra edge in solving optimization problems for which we need to solve a sequence
of problems that differ only in the decreasing bound of a cardinality constraint.
In this setting we only need to add one more clause $\neg y_{k}$ for a new value
of $k$, and the search can be resumed keeping all previous clauses as it is. This
works because if a comparator network is a $k$-selection network, then it is also
a $k'$-selection network for any $k'<k$. This property is called {\em incremental
strengthening} and most state-of-the-art SAT-solvers provide a user interface for
doing this.

\end{document}